\newtheorem{prop}{Proposition}
\def\mytitle{Mapping Surface Code to Superconducting Quantum Processors}
\newcommand{\myCompilerName}{Surfmap}
\newcommand{\myCompilerNameSpace}{Surfmap }
\def\BibTeX{{\rm B\kern-.05em{\sc i\kern-.025em b}\kern-.08em
    T\kern-.1667em\lower.7ex\hbox{E}\kern-.125emX}}
\title{\mytitle}
\author{%
	Anbang Wu \\
	Department of Computer Science\\
	University of California, Santa Barbara \\
	\texttt{anbang@ucsb.edu} \\
	\And
	Gushu Li \\
	Department of Electrical \& Computer Engineering\\
	University of California, Santa Barbara \\
	\texttt{gushuli@ece.ucsb.edu} \\
	\AND
	Hezi Zhang \\
	Department of Computer Science \\
	University of California, Santa Barbara \\
	\texttt{hezi@ucsb.edu} \\
	\AND
	Gian Giacomo Guerreschi \\
	Intel Labs \\
	Santa Clara, California \\
	\texttt{gian.giacomo.guerreschi@intel.com}
	\AND
	Yufei Ding \\
	Department of Computer Science\\
	University of California, Santa Barbara \\
	\texttt{yufeiding@cs.ucsb.edu} \\
	\AND
	Yuan Xie \\
	Department of Electrical \& Computer Engineering\\
	University of California, Santa Barbara \\
	\texttt{yuanxie@ucsb.edu} \\
} 
\begin{document}
\maketitle
\thispagestyle{plain}
\pagestyle{plain}

\begin{abstract}

In this paper, we formally describe the three challenges of mapping surface code on superconducting devices, and present a comprehensive synthesis framework to overcome these challenges. The proposed framework consists of three optimizations. First, we adopt a geometrical method to allocate data qubits which ensures the existence of shallow syndrome extraction circuit. The proposed data qubit layout optimization reduces the overhead of syndrome extraction and serves as a good initial point for following optimizations. Second, we only use bridge qubits enclosed by data qubits and reduce the number of bridge qubits by merging short path between data qubits. The proposed bridge qubit optimization reduces the probability of bridge qubit conflicts and further minimizes the syndrome extraction overhead.  Third, we propose an efficient heuristic to schedule syndrome extractions. Based on the proposed data qubit allocation, we devise a good initial schedule of syndrome extractions and further refine this schedule to minimize the total time needed by a complete surface code error detection cycle. Our experiments on mainsstream superconducting quantum architectures have demonstrated the efficiency of the proposed framework.

\end{abstract}
\twocolumn
\section{Introduction}

Quantum hardware has advanced significantly in the last decade and demonstrated `Quantum Supremacy' for the first time in 2020~\cite{Arute2019QuantumSU}. 
Among various quantum hardware technologies~\cite{Kok2007LinearOQ, DiVincenzo2000ThePI, AsselmeyerMaluga20213DTQ, Google72Q},
the superconducting (SC) qubit is currently one of the most promising technique candidates for building quantum processors~\cite{Paik2011ObservationOH, Chen2014QubitAW} due to its low error rates, individual qubit addressability, fabrication scalability, etc. Many latest quantum processors adopt the SC technology, e.g., IBM's 65-qubit heavy-hexagon-architecture  chip~\cite{zhang2020high}, Rigetti's 32-qubit octagonal-architecture chip~\cite{gold2021experimental}, Google's 54-qubit 2D-lattice chip~\cite{Arute2019QuantumSU}. %

The low error rate %
of SC processors makes them an ideal platform for %
quantum error correction (QEC)~\cite{shor1995scheme, steane1996error, calderbank1996good, steane1996multiple, bravyi1998quantum} and thus realizing fault-tolerant (FT) quantum computation. Among various QEC codes, surface code~\cite{fowler2012surface} is a popular choice as it possesses one of the best error correction capabilities and can tolerate a high (up to 1\%) physical error rate. Even on noisy near-term devices, the surface code family can encode an arbitrarily accurate logical qubit with a large enough array of physical qubits.
This makes  surface code one of the most feasible QEC choices for demonstrating near-term FT quantum computation.

With a readily available surface code array, many recent research efforts have been put into improving the efficiency of FT quantum computation, varying from compilation~\cite{Ding2018MagicStateFU, Paler2019SurfBraidAC}, communication scheduling~\cite{JavadiAbhari2017OptimizedSC, Hua2021AutoBraidAF}, and micro-controller design~\cite{Tannu2017TamingTI}. All these works are based on a nontrivial assumption: we already find a scalable way to compose logical qubits on existing quantum devices, in particular SC devices, with the surface code family.

Yet, implementing surface code on SC devices itself is a difficult problem. %
The implementation of surface code separates physical qubits into two categories. 
The first category of qubits is called ``data qubits'' and is used to encode the logical qubit. 
Physical qubits in the second category are used to detect errors on data qubits, and are thus called ``syndrome qubits''. Each syndrome qubit extracts error syndromes on four neighboring data qubits with a specialized quantum  circuit, named ``measurement circuit''~\cite{fowler2012surface}. 
To make measurement circuits executable, surface code requires a 2D-lattice qubit array where each qubit is connected or coupled to four other qubits. Nonetheless, such an architecture is not readily available on many latest quantum processors~\cite{zhang2020high, gold2021experimental} as the dense connection in the 2D qubit array would induce a high physical error rate. 
Previous work tackles the gap between surface code and the sparse-connected SC device either by tailoring the architecture with tunable coupling~\cite{Google72Q} or by designing a QEC code upon the surface code~\cite{Chamberland2020TopologicalAS}. The former method is expensive and may introduce extra device noises while the latter method is not automated.

In this paper, we propose the first automatic synthesis framework ``\myCompilerName'' for stitching the surface code family to various SC devices, without any variation in device or surface code. Our framework builds upon recent theoretical work on generating individual measurement circuits over sparse-connected SC devices~\cite{Lao2020FaulttolerantQE, Chamberland2020TopologicalAS}. 
These works generalize the syndrome qubit to a tree of low-degree qubits (a.k.a bridge tree with constituent qubits called bridge qubits) that connects to the four target data qubits. In essential, these work trades the qubit degree with the qubit number.

Yet, these measurement circuit generation works are far from tackling the overall surface code synthesis. 
These works cannot generate measurement circuits unless the data qubits and bridge qubits for each measurement circuit are assigned. Even all measurement circuits are generated, how to efficiently execute them is still not answered by these works. Systematically, to tackle the surface code synthesis problem on SC devices, we should address three key challenges.
First, \textit{the allocation of data qubits}. 
If the four data qubits of a measurement circuit are far away from each other, many bridge qubits will be needed to connect these data qubits. Oppositely, if these data qubits are too close, there will not be enough space for a bridge tree. %
Second, \textit{the selection of bridge qubits}. An improper selection of bridge qubits may cause conflicts in measurement circuits that several measurement circuits contend for one bridge qubit.
Such bridge qubit conflict will cause a sequential execution of measurement circuits thus increasing the error detection latency and degrading the error correction performance.
Third,  \textit{the execution order of measurement circuits}. As indicated above, sequential execution of measurement circuits is not acceptable and
an efficient synthesis should utilize the parallelism between measurement circuits, as much as possible. %

Our framework decouples the optimization space of surface code synthesis with a modular optimization scheme. 
To be specific, our framework consists of three key steps. 

In the first step, we optimize the allocation of data qubits since %
data qubits are the key to gluing  measurement circuits together,
and once allocated their physical mapping should not change to avoid error proliferation. As long as allocated data qubit layout ensures the existence of measurement circuits, we prefer a shorter total distance between data qubits since this reduces bridge qubits overhead for constructing measurement circuits. To enable this optimization, we search data qubits over a series of ``rectangular'' blocks, with each block created by a pair of three-degree qubits or one four-degree qubit.
The key insight behind such design is that existing SC architectures can always be embedded into a 2D lattice and the four data qubits for a measurement circuit exactly form a rectangle in the 2D lattice.
Our design ensures the existence of measurement circuits since there are enough high-degree qubits in each rectangle to connect data qubits. Rectangle-based search is efficient as it enables a coarse-grained exploration over the search space. Furthermore, to reduce the possible conflicts of measurement circuits, we require the overlapping between rectangular blocks at most happens on rectangle boundaries.

In the second step, it naturally comes to the optimization of bridge qubits. The optimization goal is to reduce the conflicts between measurement circuits, i.e., reduce the mutual bridge qubits, and minimize the bridge tree connecting data qubits. The key insight for such optimization is that the size of the measurement circuit is proportional to the bridge tree size and measurement circuits without 
conflicts can be executed together to shorten the time window of error correction cycles. To meet these goals, we first limit the search scope of bridge qubits inside rectangular blocks enclosed by data qubits. Since rectangular blocks have zero overlapping areas, such local search could greatly reduce bridge qubit conflict. 
On the other hand, to find small bridge trees, we adopt two different 
heuristics which complement each other.
We then pick the best results from these two methods as bridge tree candidates.

In the third step, we optimize the order of measurement circuit execution. The reason for this optimization is that conflicts between measurement circuits can sometimes be inevitable. In such cases, we have to execute conflicted measurement circuits in sequential. To reduce the total time window of measurement circuits execution in this case, we propose an iterative refinement method. We first place conflicted measurement circuits into different partitions, and then refine the partitions by moving large measurement circuits into one partition while ensuring the compatibility between measurement circuits.
The key insight behind such optimization is that when executing measurement circuits in parallel, the execution time depends on the one with the largest circuit depth.  By moving large measurement circuits together, we can reduce the execution time of partitions consisting of small measurement circuits.

We evaluate the proposed synthesis framework by comparing it to manually designed QEC codes~\cite{Chamberland2020TopologicalAS} on two SC architectures. The results show the surface codes synthesized by our framework can achieve equivalent or even better error thresholds. This result is inspiring as it unveils the possibility that automated synthesis by machine can surpass the manual QEC code design of experienced theorists. 
We also investigate our framework on various mainstream SC architectures and present an analysis of several architecture design options based on the investigation result.
The proposed synthesis framework would be of great interest to both QEC researchers and quantum hardware designers.  
Theorists would have a baseline to compare with when designing novel QEC codes.
Hardware researchers could benefit from our framework in two aspects: a) they could focus on improving the device without worrying about meeting the requirements of a specific QEC code;
b) they can identify inefficient architecture designs for QEC codes with our synthesis framework.

Our contributions in this paper are summarized as follows:
\begin{itemize}
    \item We promote the importance of a synthesis framework for achieving good QEC implementations towards different quantum hardware architectures.
    \item We systematically formulate the surface code synthesis problem on SC devices for the first time and identify three key challenges: data qubit allocation, bridge qubit selection and  measurement circuit execution scheduling. %
    \item We design and implement a modular synthesis framework that tackles the identified three challenges step by step, with a series of insights extracted from surface code and SC architectures.
    \item Our evaluation demonstrates the effectiveness of the proposed synthesis framework with one comparative study to manually designed QEC codes and a comprehensive investigation on various mainstream SC architectures. 
\end{itemize}

\section{Background} 

In this section, we introduce the key concepts for understanding the implementation requirements of surface code~\cite{bravyi1998quantum,dennis2002topological,barends2014superconducting}. 
We do not cover the basics of quantum computing and recommend~\cite{nielsen2002quantum} for reference.

\begin{figure}[ht]
    \centering
    \includegraphics[width=\linewidth]{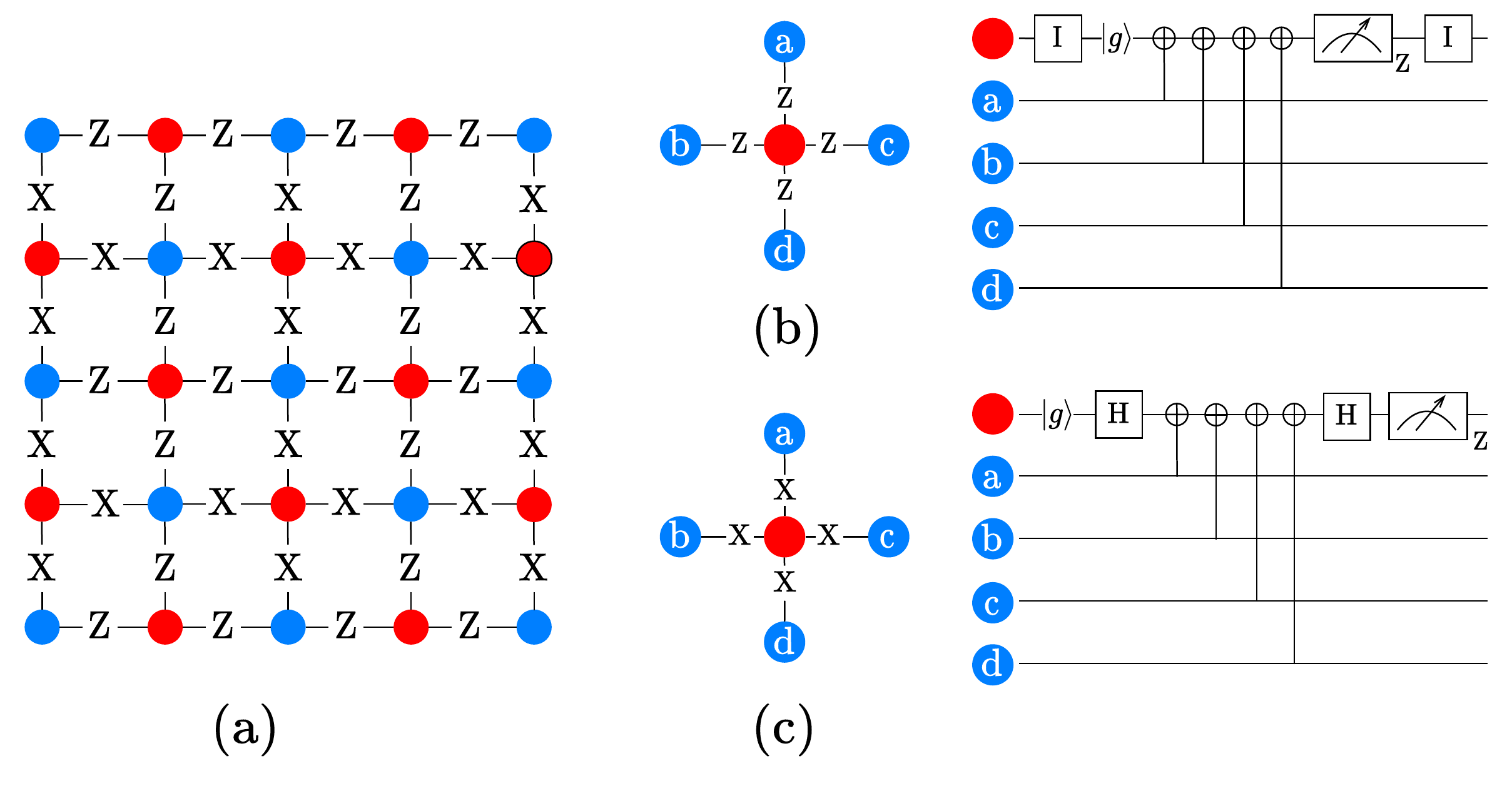}
    \caption{Common components of the surface code. (a) Surface code lattice with data qubits (blue dots) and syndrome qubits (red dots). (b) Z-type syndrome extraction and its circuit. (c) X-type syndrome extraction and its circuit.}
    \label{fig:surf_grid}
\end{figure}

\textbf{Data and syndrome qubits:}
Surface code encodes logical qubit information in a 2D lattice of physical qubits shown in Figure~\ref{fig:surf_grid}(a). Physical qubits in the code lattice can be divided into two types: data qubits and syndrome qubits (represented as blue and red dots respectively, see Figure~\ref{fig:surf_grid}(a)).
The logical information is encoded in the data qubits.
The error information on these data qubits can be extracted by applying surface code operations and measuring  syndrome qubits.

\textbf{Pauli operator and stabilizer:}
In surface codes, the relationship between one syndrome qubits and its neighboring data qubits is represented by the product of an array of Pauli operators (a.k.a Pauli string) labeled on the edges between data qubits and syndrome qubits (Figure~\ref{fig:surf_grid}(a)).
For each syndrome qubit, the Pauli string on its edges can be in one of two possible patterns. %
The first one (Z-type) is shown in Figure~\ref{fig:surf_grid}(b).
The connections between the center syndrome qubit and the four data qubits are all labeled by the $Z$ operators and represented by the Pauli string $Z_aZ_bZ_cZ_d$.
The second one (X-type) in Figure~\ref{fig:surf_grid}(c) is similar but all connections are labeled by the $X$ operators and represented by the Pauli string $X_aX_bX_cX_d$.
For these two different patterns, we will have corresponding syndrome measurement circuits to detect errors in the data qubits (shown on the right of Figure~\ref{fig:surf_grid}(b) and (c)).
The syndrome extraction circuits project the state of data qubits $\{a,b,c,d\}$ into the eigenstates of the corresponding Pauli strings.
In the context of QEC, the specific Pauli strings that one measures are called \textit{stabilizers}~\cite{gottesman1996class} and the syndrome extraction circuits are also known as \textit{stabilizer measurements} ~\cite{ calderbank1997quantum}.
Without ambiguity, we use the stabilizer notation to represent a syndrome extraction. And we denote the stabilizer $Z_aZ_bZ_cZ_d$ ($X_aX_bX_cX_d$) with $Z_{abcd}$ (respectively $X_{abcd}$) for simplicity.

\textbf{Error detection:} Surface code can detect Pauli X- and Z-errors on data qubits by using Z- and X-type stabilizer measurement circuits, respectively. 
An error on a data qubit may affect the measurement results of stabilizers associated with this data qubit.
By gathering all such stabilizer measurement results, a surface code error correction protocol can infer what errors occurred in the lattice and consequently apply the corresponding correction. 
Further details of the error correction protocol
can be found in \cite{fowler2012surface}.

\begin{figure}[ht]
    \hspace{-15pt}
    \centering
    \includegraphics[width=\linewidth]{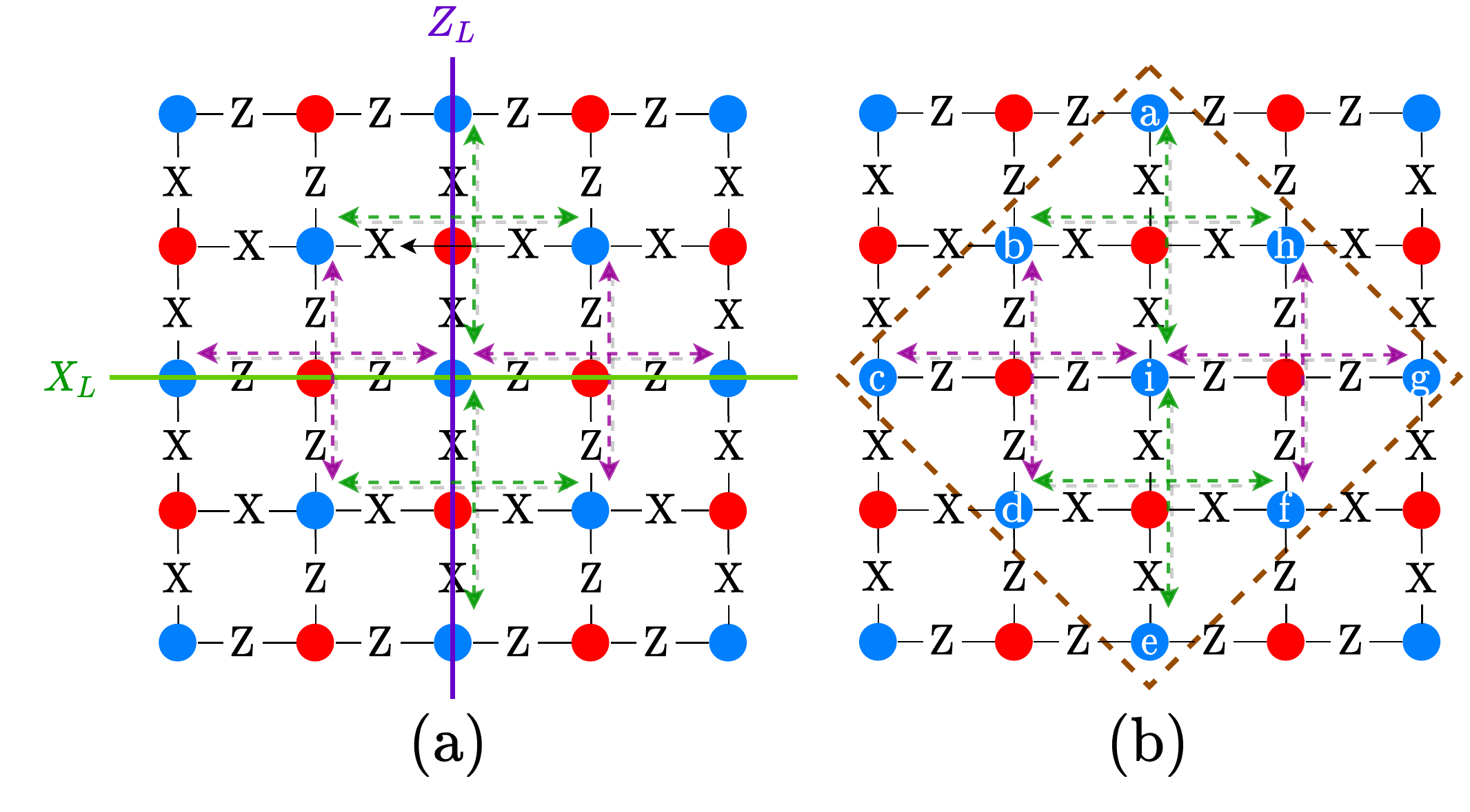}
    \caption{Code distance and compact lattice. (a)  Surface code logical operations with distance 3. (b) Inside the rotated rectangle (dashed brown line) is a compact surface code lattice with the same code distance.}
    \label{fig:surf_logi}
\end{figure}

\textbf{Code distance:} The error correction ability of QEC codes is measured by their code distance~\cite{knill2000theory,calderbank1996good}. The code distance is defined to be the minimum number of physical qubits in the support of logical X or Z operations on the encoded logical qubit
(denoted by $X_L$ or $Z_L$ in Figure~\ref{fig:surf_logi}). 
Figure~\ref{fig:surf_logi} (a) shows logical operations in a distance-3 surface code. It is possible to obtain a more compact surface code lattice without changing the code distance, as shown in Figure~\ref{fig:surf_logi}(b). In this paper, we focus on the rotated surface code in Figure~\ref{fig:surf_logi}(b).

\textbf{Stabilizer synthesis}: The first step of surface code synthesis is to implement individual stabilizer measurement circuit. Various methods have been proposed to map one single stabilizer to sparse-connected SC devices, such as the degree-deduction technique~\cite{Chamberland2020TopologicalAS}, and the flag-bridge circuit~\cite{Lao2020FaulttolerantQE, Chao2019FlagFE, Chamberland2017FLAGFE, Chamberland2019TriangularCC}. Once data qubits and needed ancillary qubits (including the syndrome qubit) for the stabilizer measurement are appointed, these method can generate the corresponding  measurement circuit.
Figure~\ref{fig:bridge_circuit_1} 
shows the generated flag-bridge circuit for stabilizer $Z_{bcid}$ with ancillary qubits $\{ e,s,f \}$.
These work only solved the low-level circuit generation problem of one stabilizer measurement, far from tackling the overall surface code synthesis, which is more than a simple collection of measurement circuits. %

\begin{figure}[ht!]
    \centering
    \includegraphics[width=0.75\linewidth]{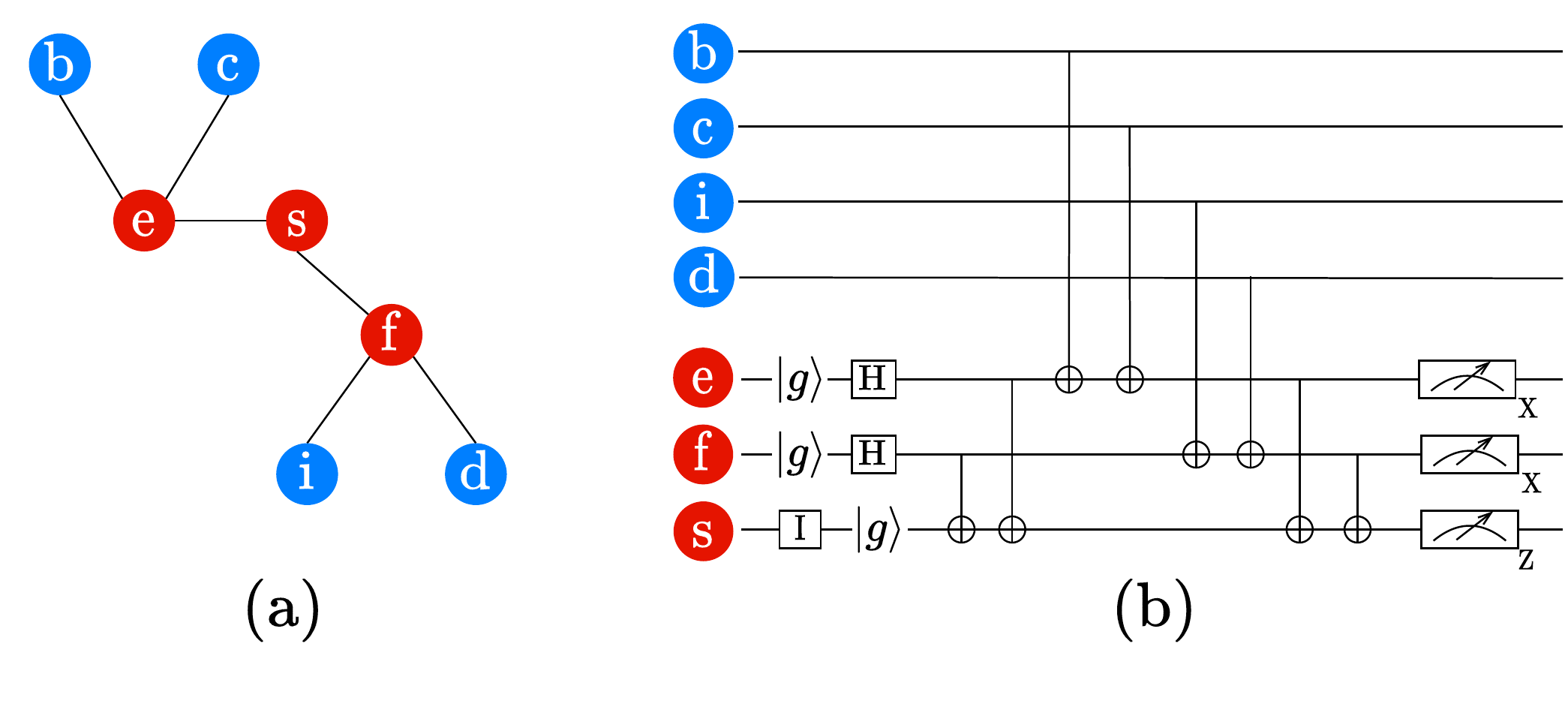}    %
    \caption{ Z-type stabilizer measurement circuit synthesis: (a) The coupling graph of data qubits (blue) and ancillary qubits (red). Qubit $s$ is the syndrome qubit. (b) The synthesized stabilizer measurement circuit that satisfies the coupling graph in (a). }\label{fig:bridge_circuit_1}
\end{figure}

Our framework targets at the high-level synthesis of the entire surface code and use these low-level stabilizer synthesis methods as the backend.

\begin{figure*}[ht!]
    \centering
    \includegraphics[width=0.85\linewidth]{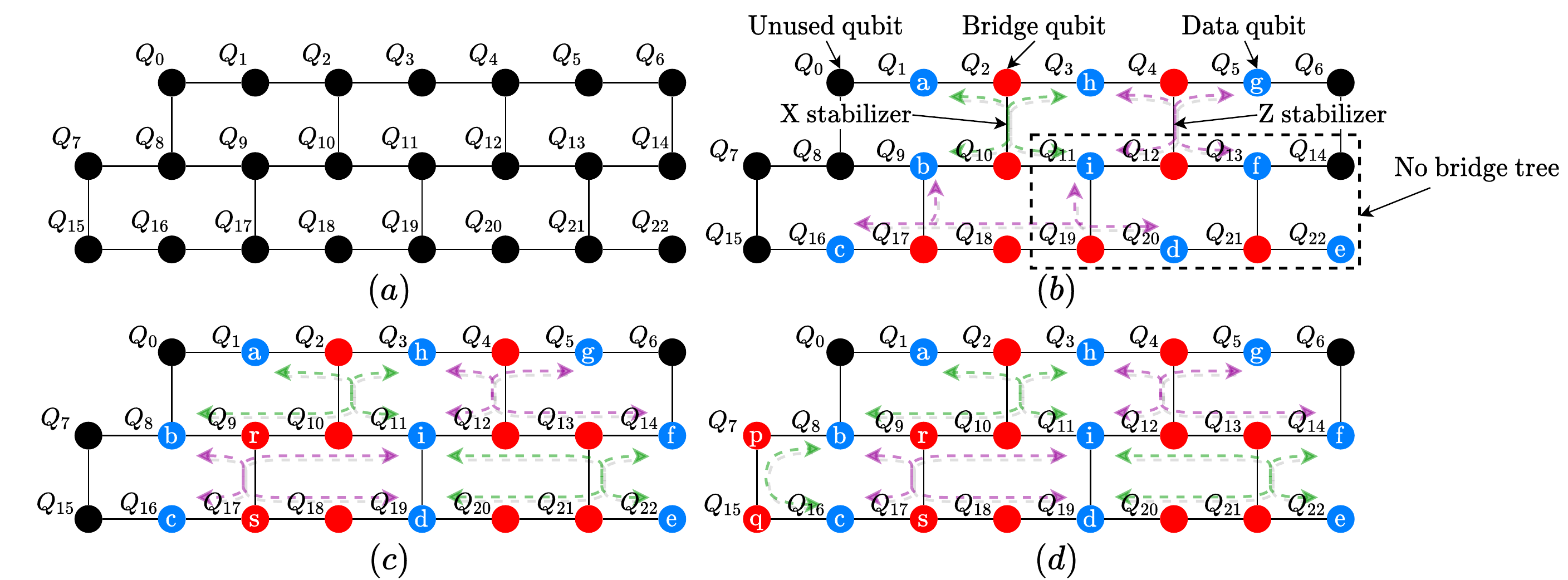}
    \caption{ A motivating example for synthesizing a (rotated) distance-3 surface code: (a) An SC device based on the hexagon structure; (b) A bad data qubit layout where the stabilizer $X_{idfe}$ cannot be measured; (c) A promising data qubit layout that ensures stabilizer measurement for all stabilizers; (d) An example for resolving bridge tree conflict. }
    \label{fig:motivation}
\end{figure*}

\section{Design Overview}
\label{sect: design_over}
The surface code relies on a 2D square-grid connectivity between physical qubits, while actual superconducting (SC) processors may not satisfy this requirement and then fail to execute the vanilla surface code.
In this paper, we aim to mitigate this gap by resynthesizing the surface code onto SC architectures with sparse connections.
We first formulate the surface code synthesis problem and then introduce the optimization opportunities.

\subsection{Surface code synthesis on SC architectures}

We consider implementing the (rotated) surface code in Figure~\ref{fig:surf_logi}(b) on a quantum device with a hexagon architecture (see Figure~\ref{fig:motivation}(a))~\cite{Chamberland2020TopologicalAS}. In this hexagon device, each qubit connects to at most three other qubits.
This imposes a challenge to synthesize stabilizer measurement circuits of a surface code since a syndrome qubit in either an X- or Z- type stabilizer measurement should connect to four data qubits (see Figure~\ref{fig:surf_grid}(b) and (c)).

To resolve this constraint, we can introduce ancillary qubits when synthesizing a stabilizer measurement circuit.
Suppose a mapping of the logical qubit in Figure~\ref{fig:surf_logi}(b) onto the hexagon architecture  is shown in Figure~\ref{fig:motivation}(b). 
To measure the stabilizer $Z_{bcid}$, one needs to connect the four data qubits $\{b,c,i,d\}$ with a syndrome qubit. 
In Figure~\ref{fig:motivation}(b), %
data qubits $\{b, c, i, d\}$ are mapped to physical qubits $\{Q_9, Q_{16}, Q_{11}, Q_{20}\}$, and %
the syndrome qubit is on $Q_{18}$.
To connect $Q_{18}$ and $\{Q_9, Q_{16}, Q_{11}, Q_{20}\}$, one can use the two ancillary qubits $Q_{17}$ and $Q_{19}$.
Qubits $Q_{17}$, $Q_{18}$ and $Q_{19}$ together form a tree that bridges the gap between data qubits $\{Q_9, Q_{16}, Q_{11}, Q_{20}\}$ and the syndrome qubit $Q_{18}$. Such a tree is called \textbf{bridge tree} and ancillae $\{Q_{17}, Q_{19}\}$ are called \textbf{bridge qubits}. For simplicity, we regard the syndrome qubit as a special bridge qubit.

\subsection{Optimization opportunities}

To deploy an entire surface code QEC protocol onto an SC architecture requires synthesizing a series of non-independent stabilizer measurements, which is far more complicated than handling one single stabilizer measurement.
In this section, we formulate the overall surface code synthesis problem into three key steps: data qubit allocation, bridge tree construction, and stabilizer measurement schedule.
We briefly introduce the objectives and the design considerations of each step.

\textbf{Data qubit allocation}: In this paper, we choose to allocate and fix the position of data qubits first as data qubits are the key to gluing stabilizer measurement circuits together and should not be changed once allocated.  
Comparing to data qubits, bridge qubits are \textit{dynamic} resources which are initialized, used, measured and decoupled in every QEC cycle, making them unsuitable for a pre-allocation.

The layout of the data qubits %
affects how efficiently stabilizer measurement circuits can be executed.
For example, we synthesize the (rotated) distance-3 surface code in Figure~\ref{fig:surf_logi}(b) with two data qubit layouts in Figure~\ref{fig:motivation}(b) and Figure~\ref{fig:motivation}(c).
In Figure~\ref{fig:motivation}(b), the stabilizer $X_{idfe}$ cannot be measured without moving data qubits and inserting SWAP gates, which are not allowed to avoid error proliferation.%
 In contrast, all stabilizer measurements ($\{X_{abhi}, X_{idfe}, X_{fg}, X_{bc}, Z_{bcid}, Z_{higf}, Z_{ah}, Z_{de} \}$) can be executed on Figure~\ref{fig:motivation}(c) since bridge trees for these stabilizers are readily available.

\textbf{Bridge tree construction}: 
After the data qubits are placed, the next step is to select the bridge qubits and construct bridge trees for stabilizer measurements.
The first constraint in this step is that we should minimize the number of bridge qubits since using more physical qubits 
results in larger measurement circuits which are naturally more error-prone.
The second constraint is that the construction of bridge trees affects the efficiency of error detection because two stabilizers can be simultaneously measured only if their bridge trees do not intersect.
For instance, referring to Figure~\ref{fig:motivation}(c), if we measure $X_{bc}$ with bridge qubits \{$r$, $s$\}, these two qubits cannot be used as bridge qubits in the measurement circuit of $X_{abhi}$ at the same time because the bridge qubits need to be reset at the beginning of any new measurement circuit.
However, if we measure $X_{bc}$ with bridge qubits \{$p$, $q$\} in Figure~\ref{fig:motivation}(d), we can measure $X_{bc}$ and $X_{abhi}$ in parallel. An efficient bridge qubit selection and tree construction should enable the concurrent measurement of as many stabilizers as possible.

\textbf{Stabilizer measurement scheduling}:
The third step is to schedule the execution of the stabilizer measurement circuits.
It would be desirable to execute the stabilizer measurement in parallel as much as possible since it can reduce the execution time and mitigate the decoherence error. 
However, stabilizer measurement circuits have overlapped bridge qubits (a.k.a bridge qubit conflict) cannot be executed simultaneously. %
For example in Figure~\ref{fig:motivation}(c), the measurement circuit of $X_{abhi}$ and $Z_{bcid}$ cannot be measured together since they share bridge qubits $\{q_{9}, q_{10}\}$. 
One possibility is to measure $X_{abhi}$ and $X_{idhe}$ first, then measures $Z_{hgif}$ and $Z_{bcid}$. Though seems promising, it is not optimal as these two groups of measurements take 20 operation steps in total, using the flag-bridge circuit~\cite{Lao2020FaulttolerantQE}(Figure~\ref{fig:bridge_circuit_1}) as backend. As a comparison, if we measure $X_{abhi}$ and $Z_{hgif}$ first and measure $X_{idfe}$ and $Z_{bcid}$ second, the total operation step number is only 18. 
It is usually not a simple task to schedule the stabilizer measurements optimally.
Our objective is to identify the potential parallelism in stabilizer measurements and figure out an efficient heuristic scheduling method to minimize the overall error detection cycle time (running all stabilizer measurements once is a cycle in the surface code QEC protocol).

\section{Synthesis Algorithm Design}
\label{sect:algorithm}
In this section, we introduce our surface code synthesis flow. As discussed above, we will introduce three key steps, data qubit allocation, bridge tree construction and stabilizer measurement scheduling.

\subsection{Data qubit allocator}
\label{sect:map_data}

\begin{algorithm}[h]\footnotesize
\SetAlgoLined
\KwIn{Device architecture graph $G$}
\KwOut{Data qubit layout $data\_layout$}

$L_h =$ all three- and four-degree nodes in $G$\;
$bridge\_rects = []$ ; \tcp{the set of bridge rectangles} 
\For{$n_a$ in $L_h$}{
    \uIf{$deg(n_a) == 3$}{
        $n_b$ = the nearest high-degree node of $n_a$\;
        $rect =$ the minimal rectangle containing $n_a$, $n_b$ and their neighboring qubits\;
    }
    \Else{
        $rect =$ the minimal rectangle containing $n_a$ and its neighboring qubits \;
    }
    $bridge\_rects.append(rect)$\;
}
$r_0 =$ the bridge rectangle at the top left corner of $G$\;

$bridge\_rect\_tuple = []$; \tcp{tuples of compatible bridge rectangles;}

\Repeat{$bridge\_rect\_tuple$ converges}{
    \For{$(r_1, r_2, r_3) \in \otimes^3{bridge\_rects}$}{ 
        \uIf{$r_0, r_1, r_2, r_3$ are mutually compatible}{
            $potent\_dqbits = $ qubits enclosed by $r_0, r_1, r_2, r_3$; \tcp{potential data area;}
            \uIf{$potent\_dqbits \ne \emptyset$}{
               $bridge\_rect\_tuple.append((r_0,r_1,r_2,r_3))$; \\
                break\;
            }
        }
    }
    set $r_0$ to $r_1, r_2, r_3$ in turn to find new combination of $r_0, r_1, r_2, r_3$ that has non-empty $potent\_dqbits$\;
}

$data\_layout = []$\;
\For{$r_0, r_1, r_2, r_3$ in $bridge\_rect\_tuple$}{
    $dqb$ = the qubit at the center of $potent\_dqbits$ of $r_0, r_1, r_2, r_3$\;
    $data\_layout.append(dqb)$\;
}
\caption{Data qubit allocation}
\label{alg:map_data_qubits}
\end{algorithm}

\begin{figure*}[ht!]
    \centering
    \includegraphics[width=0.9\linewidth]{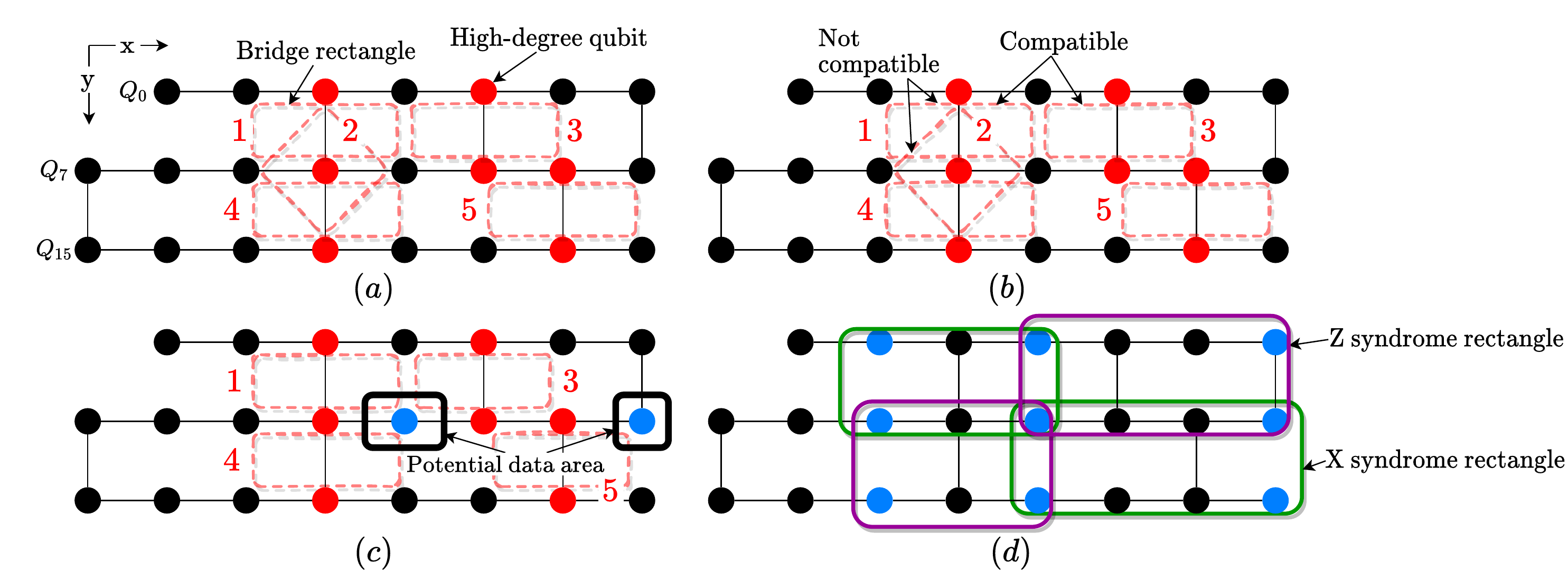}
    \caption{ Data qubit allocation example. (a) A modified device from Figure~\ref{fig:motivation}(a). Red circles indicate physical qubits with high degree of connectivity (i.e. with 3 or more edges). (b) Finding compatible bridge rectangles. (c) Locating data qubits. (d) Final data qubit layout and syndrome rectangles. }
    \label{fig:data_placement}
\end{figure*}

Since we decided to fix the location of data qubits, the basic requirement is that the measurement circuits can be feasibly constructed for all stabilizers. 
To ensure this property is guaranteed in the data qubit mapping we found, we have the following proposition regarding the bridge tree construction.
\begin{prop}\label{prop:bridge_tree}
In any bridge tree for a stabilizer having support on four data qubits, there are at least one four-degree node or two three-degree nodes.
\end{prop}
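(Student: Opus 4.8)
The plan is to translate the combinatorial content of the statement into a pure fact about trees. First I would form the combined graph $T$ consisting of the bridge tree together with the four data qubits, where each data qubit is attached by the single edge joining it to the bridge qubit that acts on it in the measurement circuit. Since the bridge tree is connected and acyclic and each data qubit contributes exactly one pendant edge, $T$ is again a tree, and the four data qubits are leaves of $T$. The internal (non-leaf) vertices of $T$ are therefore exactly bridge qubits, and the degree of a bridge qubit inside $T$ is a lower bound for its degree in the device graph $G$, since $T$ uses only a subset of the physical couplings.

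Next I would invoke the standard leaf-counting identity for trees. Summing $\deg_T(v)-2$ over all vertices of any tree gives $-2$; isolating the leaves (degree $1$) and the degree-$2$ vertices then yields
\[
\sum_{v:\ \deg_T(v)\ge 3}\bigl(\deg_T(v)-2\bigr) \;=\; L-2,
\]
where $L$ is the number of leaves. Because the four data qubits are leaves we have $L\ge 4$, so the left-hand sum is at least $2$. Crucially, every vertex appearing in this sum is a bridge qubit, since the data qubits are leaves and contribute nothing.

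The conclusion then follows from an elementary case analysis on how a sum of positive integers can reach a value at least $2$. Each summand $\deg_T(v)-2$ is a positive integer. If some summand is at least $2$, the corresponding bridge qubit has $\deg_T(v)\ge 4$, hence device-degree at least four, giving a four-degree node. Otherwise every summand equals $1$, i.e.\ the contributing bridge qubits all have tree-degree exactly $3$; since the total is at least $2$, there are at least two of them, giving two three-degree nodes. This exhausts the cases and proves the proposition.

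I expect the only delicate point to be the modeling step rather than the counting: I must argue that in any legitimate bridge tree the four data qubits genuinely appear as leaves, i.e.\ each data qubit couples to exactly one bridge qubit and is never used as an internal routing vertex of the tree. This is exactly what the definition of the bridge tree and the structure of the flag-bridge measurement circuit guarantee, since the data qubits are the terminals the tree must reach rather than Steiner points, and I would make this explicit. Once that is settled, allowing extra bridge-qubit leaves only increases $L$ and can only strengthen the bound, so no generality is lost.
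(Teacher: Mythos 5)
Your proof is correct and takes essentially the same route as the paper: both are degree-sum counting arguments on the tree (the paper applies the handshake lemma $\sum_v \deg(v) = 2|E| = 2(n-1)$ with four degree-one data-qubit leaves and internal degrees at least $2$, which is just your $\sum_v(\deg_T(v)-2) = -2$ identity rearranged), followed by the same case split on how an excess of $2$ can be distributed. Your explicit handling of possible extra bridge-qubit leaves and of tree-degree versus device-degree is a minor tightening of the paper's argument, not a different approach.
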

\begin{proof}
For any graph $G(V,E)$ where $V$ is the vertex set and $E$ is the edge set, we have $\sum_{v\in V}deg(v) = 2|E|$. An $n$-vertex tree always has $n-1$ edges. A bridge tree with four data qubits has four $1$-degree leaf nodes and all other nodes should have degree of at least $2$. Therefore we have $4 + \sum_{v\in V\backslash \text{data qubits}} deg(v) = 2n-2$ and $\sum_{v\in V\backslash \text{data qubits}} deg(v) = 2n-6 = 2(n-4) + 2$. We only have $n-4$ vertices after removing the four leaf nodes. So we must have at least one four-degree node or two three-degree nodes.
\end{proof}

Proposition~\ref{prop:bridge_tree} provides a necessary condition for a feasible data qubit layout. For each stabilizer to be executable, we should ensure that there are enough three-degree or four-degree qubits around data qubits of this stabilizer. We then introduce a graph-based data qubit layout where we can find bridge trees `locally'. A local bridge tree locates inside the spatial area bounded by the data qubits of a stabilizer. Such a bridge tree is promising as it often leads to a shallow  measurement circuit.
We use the SC device shown in Figure~\ref{fig:data_placement}(a) to illustrate the data qubit allocation algorithm. 
We embed the coupling graph of this SC device into a 2D grid so that all qubits can be referenced by the spatial coordinates on the plane.
Such embedding is possible as latest SC processors are usually designed in a modular structure.

Since high-degree nodes are critical, we %
keep a list (denoted by $L_h$) of all three- and four-degree nodes in the device grid and record their coordinates. 
In Figure~\ref{fig:data_placement}(a), $L_h = \{ Q_{2}, Q_{4}, Q_{10}, Q_{12}, Q_{13}, Q_{18}, Q_{21} \}$. 
We then process the list of high degrees sequentially.
For the next %
node $n_a$ in $L_h$, if it is a three-degree node, we search for its nearest high-degree node $n_b$ and then create a minimal rectangle that contains $n_a$ and $n_b$ as well as their neighboring qubits.  If $n_a$ is node with degree $\ge 4$, then we create a rectangle that contains $n_a$ and its neighboring qubits. The rectangle created at this step is called ``bridge rectangle''. 
Figure~\ref{fig:data_placement}(a) depicts five bridge rectangles resulted from $\{Q_2, Q_{10}\}$, $\{ Q_{10} \}$, $\{Q_4, Q_{12}\}$, $\{Q_{13}, Q_{21}\}$ and $\{Q_{18}, Q_{10}\}$ , and we index them from 1 to 5. There are also other bridge rectangles that can be identified, but we omit them here for simplicity.

We can now determine the position of data qubits by using bridge rectangles. 
As shown in Figure~\ref{fig:surf_grid}, each data qubit is shared by four stabilizers. Thus, we can fix the position of a data qubit with four bridge rectangles. We begin with rectangle 1 to find four compatible bridge rectangles.
We can also begin with rectangle 2 which is created from a four-degree qubit. This will create a different surface code synthesis and we will discuss it in Section~\ref{sect:evaluation}. Here we focus on rectangle 1. Two bridge rectangles are said to be compatible if their intersection area is zero. As we can see in Figure~\ref{fig:data_placement}(b), rectangle 2 is not compatible with rectangle 1 and rectangle 4, while rectangles 1, 3, 4, and 5 are mutually compatible. We do not use incompatible rectangles because they may not allow a feasible data qubit mapping.
We then search for data qubits in the potential data area (black  rectangle in Figure~\ref{fig:data_placement}(c)) which is enclosed by four compatible bridge rectangles, as shown in Figure~\ref{fig:data_placement}(c). If the potential data area is empty, we choose another four compatible bridge rectangles. Otherwise,
we pick the qubit at the center of the potential data area as a data qubit. 

For boundary cases, we may not have enough bridge rectangles to locate one data qubit. For example, for rectangle 3, its bottom right corner is only neighbored by rectangle 5. In this case, we should locate the data qubit with constraints only from these two bridge rectangles. Specifically, a potential data qubit should satisfy: A) its x axis value $\ge$ the largest x axis value in rectangle 3 and 5; B) its y axis value should lie between the largest y axis value of rectangle 3 and the smallest y axis value of rectangle 5.  With these spatial constraints, the only qubit we can find is $Q_{14}$, as shown in Figure~\ref{fig:data_placement}(c). Other data points are found in a similar way. 

The final layout for data qubits and syndrome rectangles defined by data qubits are shown in Figure~\ref{fig:data_placement}(d). 
A syndrome rectangle is 
the extension of the bridge rectangle to include  allocated data qubits. We can then assign a stabilizer to each syndrome rectangle and synthesize the corresponding measurement circuits locally (using qubits inside each syndrome rectangle). The syndrome rectangle ensures the existence of local bridge trees and enables a natural data qubit assignment for the assigned stabilizer. For the stabilizer $X_{abcd}$ (or $Z_{abcd}$) in Figure~\ref{fig:surf_grid},
we can simply assign the topmost data qubit in a syndrome rectangle to be the data qubit `a', with the leftmost, rightmost, and bottom data qubit being data qubits `b', `c', `d', respectively. In the next section, we will discuss how to find an efficient bridge tree to connect data qubits $\{a,b,c,d\}$.

\subsection{Bridge tree finder}
\begin{algorithm}[h]\footnotesize
\SetAlgoLined
 
\KwIn{A syndrome rectangle $R$ with data qubits $\{a,b,c,d\}$}
\KwOut{Candidate bridge trees}

$star\_trees = []$; \tcp{bridge trees generated by the star tree method;}
$branching\_trees = []$; \tcp{bridge trees generated by the star tree method;}

\For{$qb$ in $R$} {
    $T$ = the bridge tree by connecting qubit $qb$ to data qubits $\{a,b,c,d\}$ with shortest paths\;
    update $star\_trees$ s.t. it only contains trees no larger than $T$\;
}

let $\{a',b',c',d'\}$ be a arrangement of $\{a,b,c,d\}$ s.t.  $l_{a'b'} + l_{c'd'} = \min \{ l_{ab} + l_{cd}, l_{ac} + l_{bd}, l_{ad} + l_{bc} \}$\; \tcp{$l_{ab}$ is the distance of $a\to b$;}

connect $a'$ and $b'$, $c'$ and $d'$ with shortest paths, respectively\;

connect the path $a' \to b'$ and $c' \to d'$ with shortest paths\;

\For{$qb_1$ in $a' \to b'$, $qb_2$ in $c' \to d'$} {
    $T$ = the bridge tree by connecting $qb_1$ and $qb_2$ by shortest paths\;
    update $branching\_trees$ s.t. it only contains trees no larger than $T$\;
    
}
Merge $star\_trees$ and $branching\_trees$ to find a list of minimal bridge trees;

\caption{Bridge tree finder}
\label{alg:bridge_tree_finder}
\end{algorithm}

\begin{figure}[ht!]
    \centering
    \includegraphics[width=\linewidth]{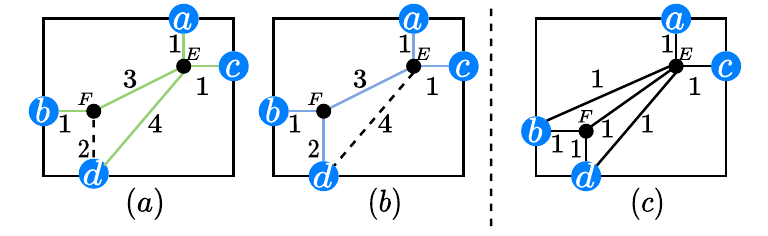}
    \caption{ Finding bridge trees in a syndrome rectangle with data qubits $\{a,b,c,d\}$. (a)(b) shows the case where path merge is efficient, while (c) shows when path merge incurs extra overhead. (a) Green edges denote the shortest paths from qubit $E$ to data qubits and these paths form a bridge tree with length 10. (b) Blue edges form a bridge tree with length 8. (c) An example where data qubits are close to each other.}
    \label{fig:bridge_path}
\end{figure}

After locating the data qubits and syndrome rectangles, we then construct the bridge tree for each syndrome rectangle.
Since syndrome rectangles only intersect at borderlines, strict local bridge trees, whose bridge qubits all locate within the interior of the syndrome rectangle, will enable concurrent measurement of stabilizers naturally. Thus, we only consider finding bridge trees in the syndrome rectangle. %
To enable higher stabilizer measurement fidelity, we prefer using small bridge trees. 
Actually, the error correction of the synthesized code is sensitive to the length of the bridge tree because one more edge in the bridge tree will result in two more CNOT gates in the measurement circuit and likely leads to a high probability of correlated errors which are hard to detect and correct. Thus, 
bridge trees should be as small as possible.

One natural way for finding small bridge trees is to first locate the bridge tree root and then connect the tree root to data qubits with shortest paths. We denote this method by ``{star tree}'' method. One drawback of this method is that it may miss the opportunity of path merge.
 For example, in the syndrome rectangle in Figure~\ref{fig:bridge_path}(a), the bridge tree induced by the shortest paths from interior qubit $E$ to data qubits has a length 10 (green edges) . In contrast, by merging paths $E\to F\to b$ and $E \to d$, we can get a bridge tree of length 8 (blue edges in Figure~\ref{fig:bridge_path}(b)) , which reduces the number of CNOT gates by at least 4 in the resulted stabilizer measurement circuit.

To overcome the problem above, we propose the ``{branching tree}'' method, which first connects close data qubit pairs by shortest paths and then connects these shortest paths to build a complete bridge tree (pseudo code in in Algorithm~\ref{alg:bridge_tree_finder}). Suppose we are constructing the bridge tree for the syndrome rectangle in Figure~\ref{fig:bridge_path}(a). We first find the shortest paths $a \to c$ and $b \to d$, since $l_{ac} + l_{bd}$ ($l_{ac}$ is the length of the shortest path from $a$ to $c$) is shorter than $l_{ab} + l_{cd}$ and $l_{ad} + l_{bc}$. We then connect path $a\to c$ and $b \to d$ with the path $E \to F$,  immediately resulting in the small bridge tree (blue edges) in Figure~\ref{fig:bridge_path}(b). The following proposition bounds the length of the bridge tree generated by the branching tree method:

\begin{prop}\label{prop:bridge_perf}
Let the total edge length of the bridge tree $T$ generated by the branching tree method be $E(T)$, then,
$$E(T) \le \frac{1}{2}(l_{ab}+l_{ac}+l_{ad}+l_{bc}+l_{bd}+l_{cd}).$$
\end{prop}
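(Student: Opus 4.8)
The plan is to view the branching tree $T$ as the union of three shortest paths and bound its total length by summing the lengths of these paths. By construction $T$ consists of the shortest path realizing $l_{a'b'}$, the shortest path realizing $l_{c'd'}$, and the shortest connecting path joining these two paths. Writing $l_{\mathrm{conn}}$ for the length of the connecting path, the union has at most $l_{a'b'}+l_{c'd'}+l_{\mathrm{conn}}$ edges (shared edges can only decrease the count), so $E(T)\le l_{a'b'}+l_{c'd'}+l_{\mathrm{conn}}$.

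Next I would bound $l_{\mathrm{conn}}$. Since the connecting path is a shortest path between the vertex set of the path $a'\to b'$ and the vertex set of the path $c'\to d'$, and since $a',b'$ lie on the first path while $c',d'$ lie on the second, $l_{\mathrm{conn}}$ is at most every one of the four cross distances $l_{a'c'}, l_{a'd'}, l_{b'c'}, l_{b'd'}$. In particular it is at most their average, $l_{\mathrm{conn}}\le \tfrac14(l_{a'c'}+l_{a'd'}+l_{b'c'}+l_{b'd'})$.

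The final ingredient is the minimality built into the arrangement $\{a',b',c',d'\}$. The right-hand side of the claim is exactly $\tfrac12(P_1+P_2+P_3)$, where $P_1=l_{a'b'}+l_{c'd'}$, $P_2=l_{a'c'}+l_{b'd'}$, and $P_3=l_{a'd'}+l_{b'c'}$ are the three ways of pairing the four data qubits. The method selects $P_1$ to be the minimum of the three, so $2P_1\le P_2+P_3$. Combining the two bounds above gives $E(T)\le P_1+\tfrac14(P_2+P_3)$, whence $\tfrac12(P_1+P_2+P_3)-E(T)\ge \tfrac14(P_2+P_3-2P_1)\ge 0$, which is precisely the asserted inequality.

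The main obstacle is the bound on $l_{\mathrm{conn}}$: one must justify that the connecting path never exceeds any single cross distance, which relies on the two data-qubit endpoints actually lying on their respective shortest paths, and then marry the resulting averaging estimate $\tfrac14(P_2+P_3)$ with the matching-minimality inequality $2P_1\le P_2+P_3$. Once these two facts are established the conclusion is pure arithmetic.
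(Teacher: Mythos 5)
Your proof is correct and takes essentially the same route as the paper's: decompose $T$ into the two shortest paths of the minimal pairing plus a shortest connecting path, bound that connecting path via the cross distances, and combine with the matching-minimality inequality $2P_1 \le P_2 + P_3$. The only difference is cosmetic---you spell out the final arithmetic that the paper compresses into ``combining these two inequalities,'' and you relax the paper's bound by $\min\{l_{ac}, l_{ad}, l_{bc}, l_{bd}\}$ to the average $\tfrac{1}{4}(P_2+P_3)$, which is weaker but still suffices.
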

\begin{proof}
W.l.o.g., we assume $l_{ab}+l_{cd} \le \min \{l_{ac} + l_{bd}, l_{ad}+l_{bc}\}$. Then in $T$, we first connect $a$ and $b$, $c$ and $d$, respectively.
On the other hand, the distance between shortest paths $a\to b$ and $c \to d$ is smaller than $\min \{l_{ac}, l_{ad}, l_{bc}, l_{bd}\}$. This proposition then can be proved by combining these two inequality. %
\end{proof}

Generally, the branching tree method is more efficient if $\min \{ l_{ab}+l_{cd}, l_{ac}+l_{bd}, l_{ad}+l_{bc}  \}$ is small, as shown in Figure~\ref{fig:bridge_path}(a)(b). In this case, the length of the resulted branching tree is very close to $\frac{1}{2} ( l_{ad} + l_{bc} )$. Instead, the length of a star tree is at least $\max \{ l_{ad}, l_{bc} \} + 2$, thus leading to a larger bridge tree. On the other hand, if $\max \{ l_{ab}+l_{cd}, l_{ac}+l_{bd}, l_{ad}+l_{bc} \}$ is small, the benefit of path merge may not cancel out its overhead for not using shortest paths. Figure~\ref{fig:bridge_path}(c) shows an example where the star tree has shorter length.
In practice, we will run both star tree method and branching tree method and find small bridge trees by merging the result of these two methods, as shown in Algorithm~\ref{alg:bridge_tree_finder}. Once the bridge tree is determined, we can assign the syndrome qubit to the center node of the bridge tree.

Overall, Algorithm~\ref{alg:bridge_tree_finder} can generate small bridge trees that approximate the optimal bridge tree as long as the distance between data qubits is small %
(by Proposition~\ref{prop:bridge_perf}). 
Another feature of Algorithm~\ref{alg:bridge_tree_finder} is that it may find many different bridge trees for a stabilizer since the shortest paths between nodes are not unique. This feature provides some flexibility for the stabilizer scheduling discussed in the next section.

\subsection{Stabilizer measurement scheduler}

\begin{algorithm}[h]
\SetAlgoLined
\footnotesize
\KwIn{Binary tuples of stabilizer and syndrome rectangle: $\{ (s, R) \}$.}
\KwOut{A schedule $P$ of binary tuples of stabilizer and bridge trees.}
\tcp{Schedule initialization;}
$S_1 = $ tuples of X-stabilizers and syndrome rectangles\;
$S_2 = $ tuples of Z-stabilizers and syndrome rectangles\;
\uIf{$exec\_time(S_1) < exec\_time(S_2)$}{
    swap($S_1$, $S_2$)\;
}
\tcp{Iterative refinement;}
\Repeat{$S_1$ converges}{
    $r_2$ = $(s, R)$ in $S_2$ that has longest execution time\;
    $swap\_list = [r_2]$\;
    \For{$i$ in $[0:k]$}{
        $S = S_{i\%2+1}$\;
        \For{$r$ in $swap\_list$}{
            $swap\_list.remove(r)$; $S.append(r_2)$\;
            \For{$r_1$ in $S$ in descending order }{
                \uIf{$r_1$ and $r$ does not have compatible bridge trees}{         \uIf{$exec\_time(r_1)>exec\_time(r)$}{
                            terminate the refinement loop\;
                    }
                    $swap\_list.append(r_1)$\;
                    $S.remove(r_1)$;
                }
            }
            \uIf{$swap\_list == \emptyset$}{
                break\;
            }
        }        
    }
    \uIf{$swap\_list \ne \emptyset$ }{
        recover $S_1$ and $S_2$ to the values before this iteration\;
        break;
    }
}
\caption{Iterative stabilizer scheduler}

\label{alg:stabilizer_schedule}
\end{algorithm}

The order of stabilizer measurements affects the time required by the error detection protocol. 
Our goal is to reduce the time requirement for all stabilizer measurements since it will naturally reduce the decoherence error during the process of stabilizer measurements.
To achieve this goal, we need to measure as many stabilizers as possible in parallel. Yet the fact that only compatible bridge trees that do not have common bridge qubits can be measured simultaneously imposes a constraint for  stabilizer measurement scheduling: only stabilizers that have compatible bridge trees can be measured in the same time. A valid schedule for the order of stabilizer measurements should avoid executing two ``conflicted'' stabilizers together.
To satisfy the constraint of stabilizer measurement scheduling yet exploit the parallelism in stabilizer measurements, we propose a heuristic scheduling approach in Algorithm~\ref{alg:stabilizer_schedule}, which consists of two steps: schedule initialization and refinement loop.

\paragraph{Schedule initialization} the proposed data qubit allocation ensures that syndrome rectangles of the same type do not have bridge tree conflicts, i.e., the measurements of X (or Z) stabilizers are compatible with each other. We then initialize the stabilizer measurement schedule by two sets $S_1$ and $S_2$ which contain X- and Z-type stabilizers, respectively.

\paragraph{Refinement Loop} the core idea of the refinement loop in Algorithm~\ref{alg:stabilizer_schedule} is to move stabilizers with large bridge trees into one set. The motivation for such refinement is that the execution time for a set of stabilizers is determined by the stabilizer with the deepest measurement circuit. With the refinement loop, except one stabilizer set which contains stabilizers with 
large bridge trees, remaining stabilizer sets only include stabilizers with small bridge trees and can be measured in a short time. %

To illustrate how the Algorithm~\ref{alg:stabilizer_schedule} works, suppose we are given stabilizers and syndrome rectangles in Figure~\ref{fig:stabilizer_schedule}. Initially, we have $S_1 = \{(s_1, R_1), (s_4, R_4), (s_5, R_5)\}$ and $S_2 = \{(s_2, R_2), (s_3, R_3), (s_6, R_6)\}$. We then send the largest element in $S_2$, which is $(s_2, R_2)$ in this case, to the $swap\_list$ and swap it into $S_1$. Since $(s_4, R_4)$ and $(s_2, R_2)$ do not have compatible bridge trees, we will move $(s_4, R_4)$ to $S_2$.
In $S_2$, $(s_6, R_6)$ is not compatible with $(s_4, R_4)$, so it will be swapped into $S_1$. 
After this swap, the refinement loop will stop since the $swap\_list$ is empty and every stabilizer in $S_1$ has a larger bridge tree than the stabilizer in $S_2$. The resulted stabilizer schedule is shown in Figure~\ref{fig:stabilizer_schedule}(b). Comparing to the initial schedule, the refined schedule in Figure~\ref{fig:stabilizer_schedule}(b) reduces the error detection cycle by one time step, and reduces the CNOT gate number by two.

The stabilizer measurement schedule found by the proposed heuristic should be better than the schedule where different types of stabilizers are measured in different rounds since every successful refinement iteration returns a better stabilizer measurement schedule.

\begin{figure}[ht!]
    \centering
    \includegraphics[width=\linewidth]{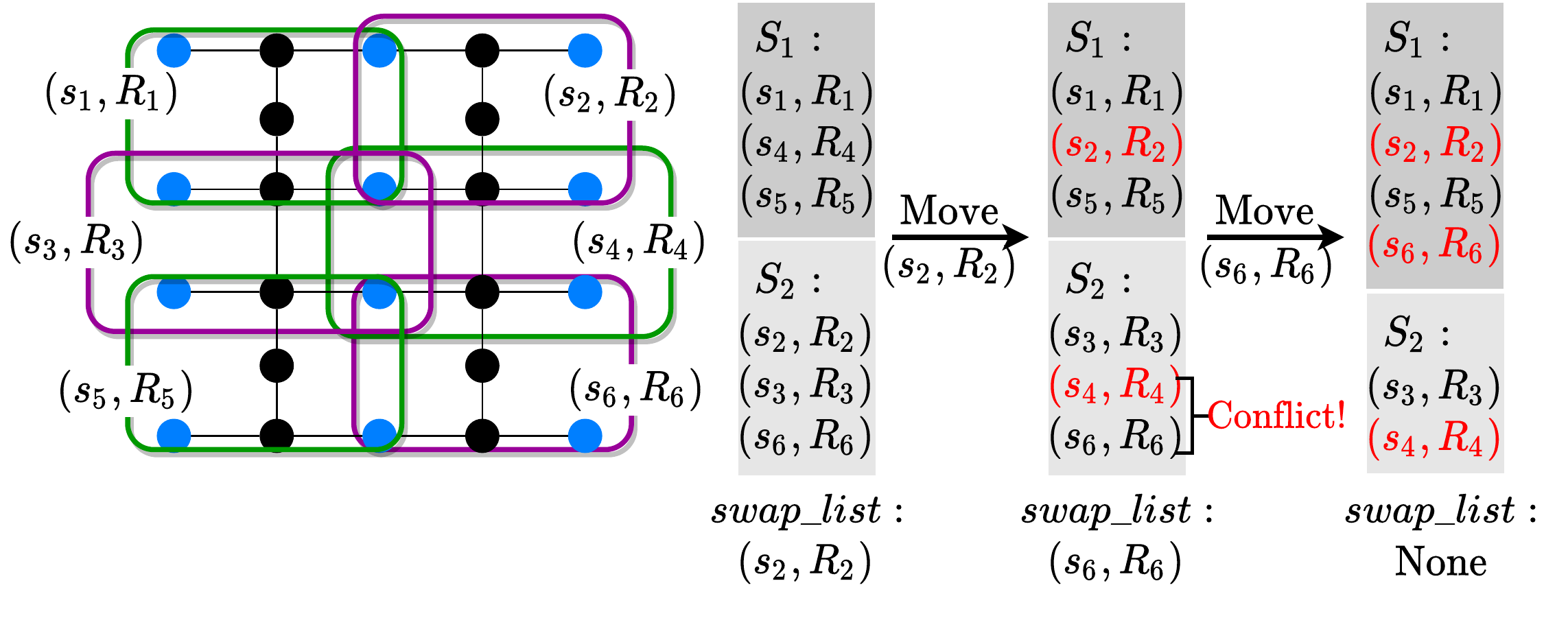}
    \caption{ An example of stabilizer measurement scheduling. } %
    \label{fig:stabilizer_schedule}
\end{figure}
\section{Evaluation}
\label{sect:evaluation}

In this section, we first evaluate the proposed synthesis framework ``\myCompilerName'' by comparing the generated surface codes with state-of-the-art manually designed QEC codes. 
We then demonstrate the efficiency of the proposed synthesis framework  by analyzing the error correction performance and resource overhead of the synthesized surface codes on mainstream SC architectures.

\subsection{Experiment Setup}

\paragraph{Evaluation setting}
We use the flag-bridge circuit~\cite{Lao2020FaulttolerantQE} as the backend for instantiating stabilizer measurement circuits. We select this backend because it provides the extra feature of fault-tolerant error correction.
We implement all numerical simulations with stim v1.5.0, which is a fast stabilizer circuit simulator~\cite{gidney2021stim}. We use PyMatching v0.4.0~\cite{higgott2021pymatching} for error decoding with measurement signals from bridge qubits. Error rates are computed by performing $10^5$ simulations, on a Ubuntu 18.04 server with a 6-core Intel E5-2603v4 CPU and 32GB RAM.

\paragraph{Metrics} We evaluate the \textbf{error threshold} of the synthesized surface codes to demonstrate their error correction performance. Error threshold indicates what hardware error rates can be tolerated and a higher error threshold is preferred. 
\textbf{Time-step counts} in an error detection cycle can also 
indicate the error correction performance~\cite{fowler2012surface}. The time-step counts also determines the execution speed of the surface code. A small time-step count is preferred. 
Finally, 
We evaluate the resource requirement of the synthesized surface codes with \textbf{CNOT counts} and \textbf{qubit counts}. A resource-efficient synthesis should use less CNOT gates and bridge qubits.

\paragraph{Device Architectures} %
we use two categories of device architectures. The architectures are shown in Table~\ref{tab:device_arch}. The first category architectures are built by tiling polygons and serve as basic structures for many SC devices, \emph{e.g.} Google's Sycamore~\cite{Arute2019QuantumSU} and IBM's latest machines~\cite{Jurcevic2020DemonstrationOQ}. The second category architectures consist of `heavy' architectures which insert one qubit for each edge of polygon devices. Edges with one extra qubit in the middle are called ``heavy edges''.
Heavy architectures have lower average qubit connectivity because of the inserted two-degree qubits. Heavy architectures are used by IBM devices~\cite{Jurcevic2020DemonstrationOQ}. Square and heavy square architectures can be embedded into a 2D grid naturally. Hexagon and heavy hexagon architectures can be embedded into a 2D grid by squashing the hexagon into the shape of a rectangle.

\begin{table}[h]
\centering
\caption{Overview of device architectures.}
\label{tab:device_arch}
\resizebox{0.49\textwidth}{!}{
\begin{tabular}{|m{1cm}|m{1.2cm}|m{1cm}|c|m{2.5cm}|}
\hline
Type                          & Name          & Building blocks & Tilling Example & Remark \\ \hline

\multirow{2}{1cm}{\raisebox{-36pt}{\parbox[c]{1cm}{Polygon Arch}}} & \raisebox{-16pt}{\parbox[c]{1cm}{Square}}       &  \raisebox{-22pt}{\hspace{6pt}\includegraphics[height=3\fontcharht\font`\B]{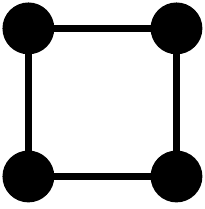}}               &  \raisebox{-34pt}{\includegraphics[height=6\fontcharht\font`\B]{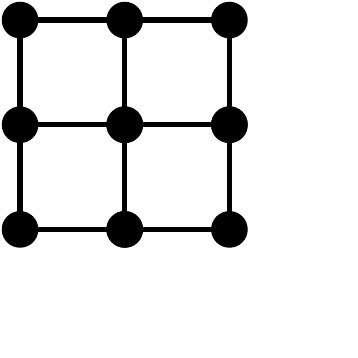}} &  \raisebox{-14pt}{\parbox[c]{2.5cm}{Each square can have at most four neighboring squares for tiling.}}      \\ \cline{2-5} 
& \raisebox{-0pt}{\parbox[c]{1cm}{Hexagon}}     &  \raisebox{-0pt}{\hspace{5pt}\includegraphics[height=4\fontcharht\font`\B]{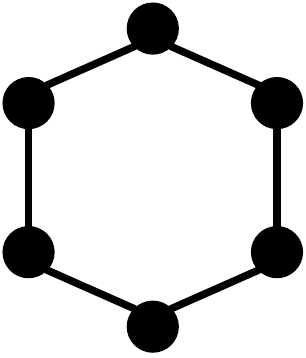}
}               & \raisebox{-14pt}[24pt]{\includegraphics[height=6\fontcharht\font`\B]{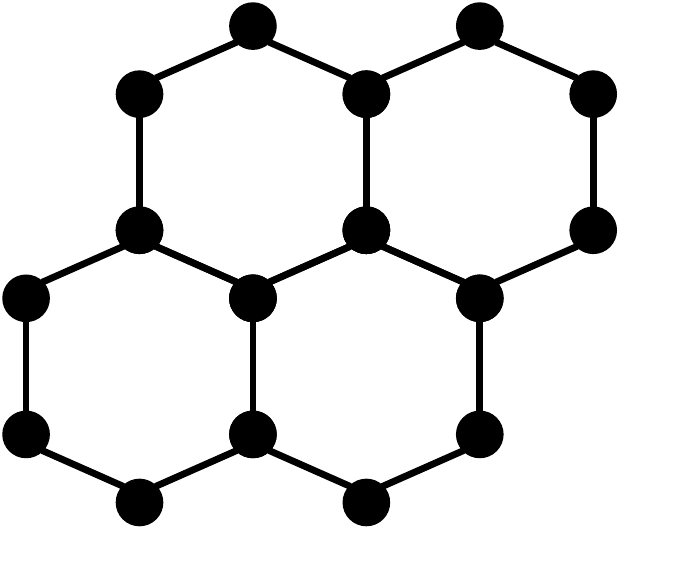}}  & 
\raisebox{16pt}{\parbox[c]{2.5cm}{Each hexagon can have at most six neighboring hexagons for tiling.}}      \\ \hline

\multirow{2}{1cm}{\raisebox{-24pt}{\parbox[c]{1cm}{Heavy Arch}}}   & \raisebox{12pt}{\parbox[c]{1cm}{Heavy square}}  & \raisebox{8pt}{\hspace{5pt}\includegraphics[height=3.5\fontcharht\font`\B]{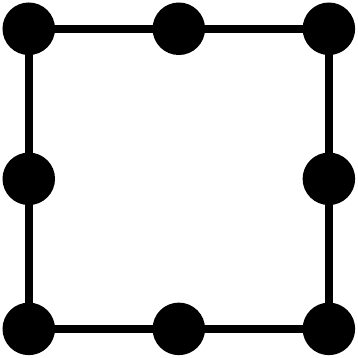}}                &  \raisebox{-10pt}[26pt]{\includegraphics[height=5.8\fontcharht\font`\B]{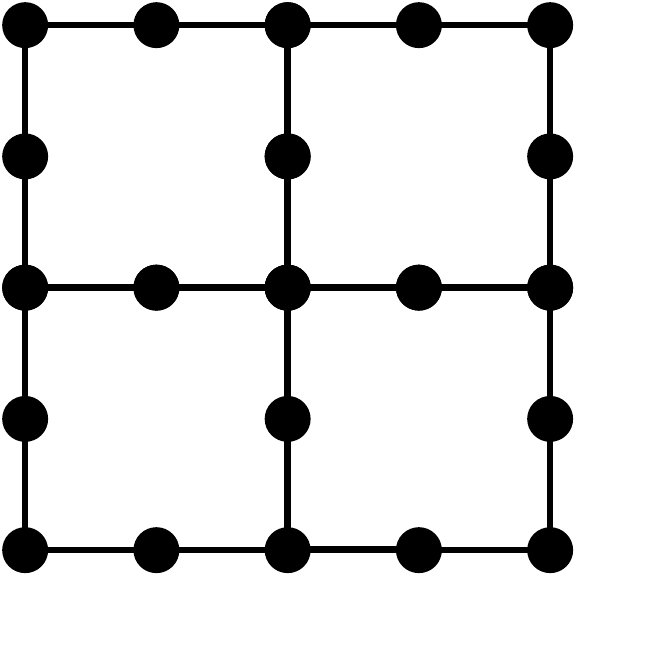}}  & \raisebox{12pt}{\parbox[c]{2.5cm}{Heavy squares are tiled like squares.}}    \\ \cline{2-5} 
& \raisebox{14pt}{\parbox[c]{1cm}{Heavy hexagon}} & \raisebox{5pt}{\hspace{2pt}\includegraphics[height=8\fontcharht\font`\B]{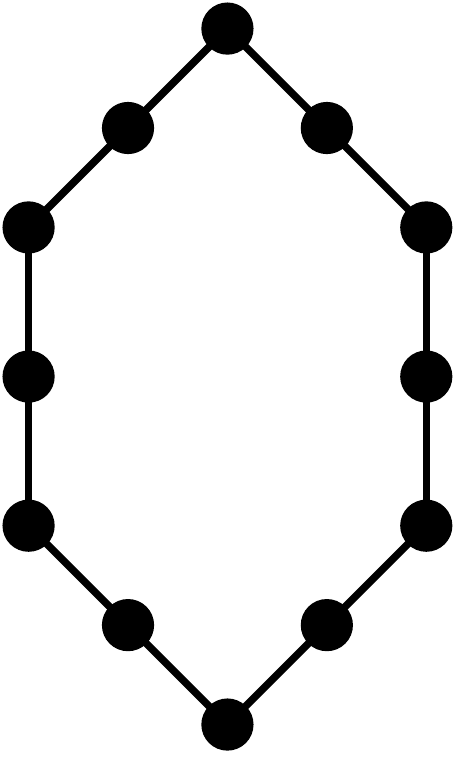}}                & \raisebox{-19pt}[36pt]{\includegraphics[height=10\fontcharht\font`\B]{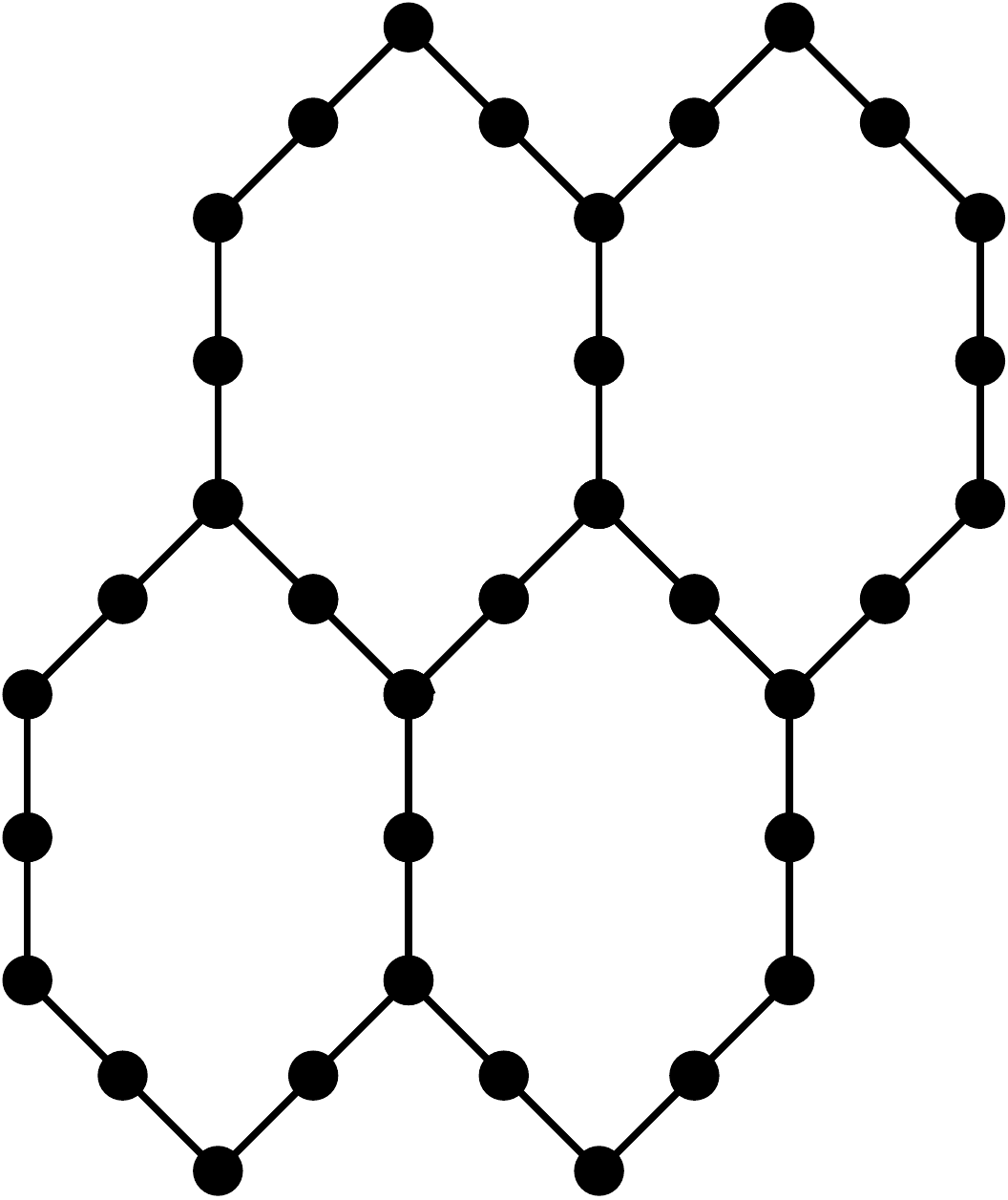}}   & \raisebox{14pt}{\parbox[c]{2.5cm}{Heavy hexagons are tiled like hexagons.}}    \\ \hline
\end{tabular}%
}

\end{table}

\paragraph{Error model} In all simulations, we assume the following circuit-level error model:
For the gate error, we assume an error probability $p_e$ for the single-qubit depolarizing error channel on single-qubit gates, the two-qubit depolarizing error channel on two-qubit gates and the Pauli-X error channel on measurement and reset operations. For the idle decoherence error, we assume each idle qubit is followed by a single-qubit depolarizing error channel per gate duration 
with error probability $0.0002$, which is estimated by the decoherence error formula $1-e^{-\frac{t}{T}} \approx 0.0002$, with the gate execution time $t= 20\,n s$  and the relaxation or dephasing time $T= 100\,\mu s$. %

These errors happen on all qubits, including data qubits and bridge qubits.

\subsection{Comparing to manually designed QEC code}

We first compare our synthesized surface code to the two manually designed QEC codes by Chamberland et al.~\cite{Chamberland2020TopologicalAS} on 
heavy architectures.
Figure~\ref{fig:our_ibm}(a)(b) shows the qubit layout and stabilizer measurement circuits of our synthesized surface codes on the heavy square architecture (`\myCompilerNameSpace Heavy Square')  and the heavy hexagon architecture (`\myCompilerNameSpace Heavy Hexagon'),
Figure~\ref{fig:our_ibm}(c)(d) shows the manually designed QEC codes on the heavy square architecture (`IBM Heavy Square') and  the heavy hexagon architecture (`IBM Heavy Hexagon'). 
The error thresholds of these codes are in  Figure~\ref{fig:comp_ibm}.

Overall, compared with the manually and specifically designed codes on the two architectures, the surface codes synthesized by \myCompilerNameSpace can have comparable or even better performance.
On the heavy hexagon architecture, the error threshold of `\myCompilerNameSpace Heavy Hexagon' is 0.0033\% which is 106\% higher than that of `IBM Heavy Hexagon' (0.0016\%), as shown in Figure~\ref{fig:comp_ibm}(a).
Such benefit comes from the fact that the `IBM Heavy Hexagon' code uses the Bacon-Shor scheme for Pauli Z-error correction (Figure~\ref{fig:our_ibm}(d)) which is not as effective as the surface code. %
On the heavy square architecture, the error threshold of `\myCompilerNameSpace Heavy Square' is the same as that of `IBM Heavy Square', as shown in Figure~\ref{fig:comp_ibm}(b).  This is because the code synthesized by \myCompilerNameSpace is 
 almost identical to that of `IBM Heavy Square' except stabilizers on boundaries, as shown in Figure~\ref{fig:our_ibm}(a)(c). %
In summary, \myCompilerNameSpace can automatically generate QEC codes that have similar or even better error correction performance compared with manually designed codes on the two studied architectures.

\begin{figure}
    \centering
    \begin{tikzpicture}
\node [above right,inner sep=0] (image) at (0,0) {\includegraphics[height=0.12\textwidth]{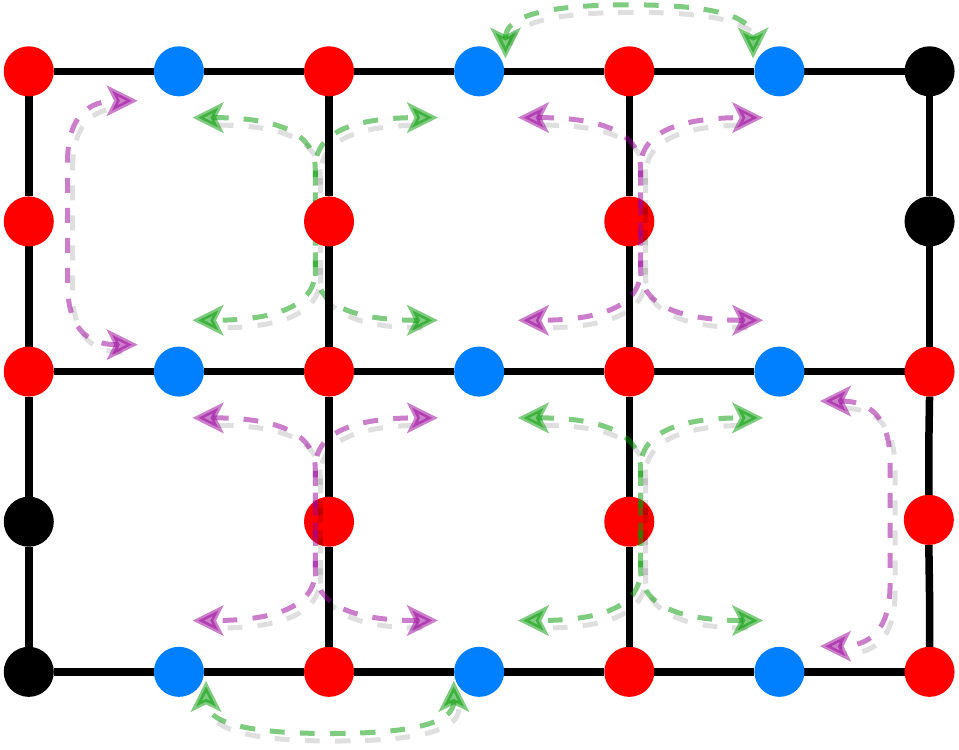}};
\node [below=0.01cm of image]  {(a) \myCompilerNameSpace Heavy Square};

\node [above right,inner sep=0] (image1) at (3.8,0) {\includegraphics[height=0.12\textwidth]{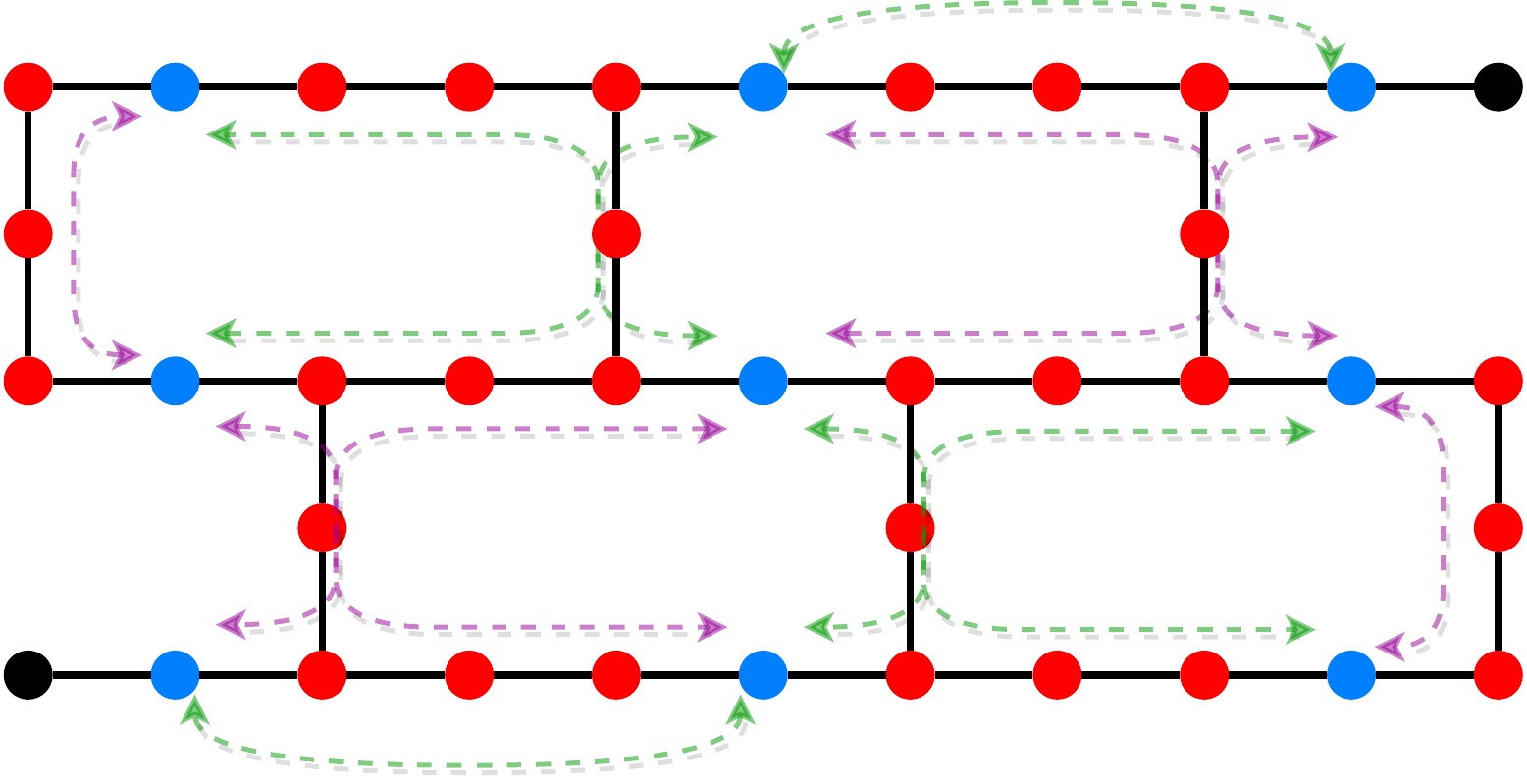}};
\node [below=0.01cm of image1]  {(b) \myCompilerNameSpace Heavy Hexagon};

\node [below=0.7cm of image, below right, inner sep=0] (image3) at (0,0) {\includegraphics[height=0.155\textwidth]{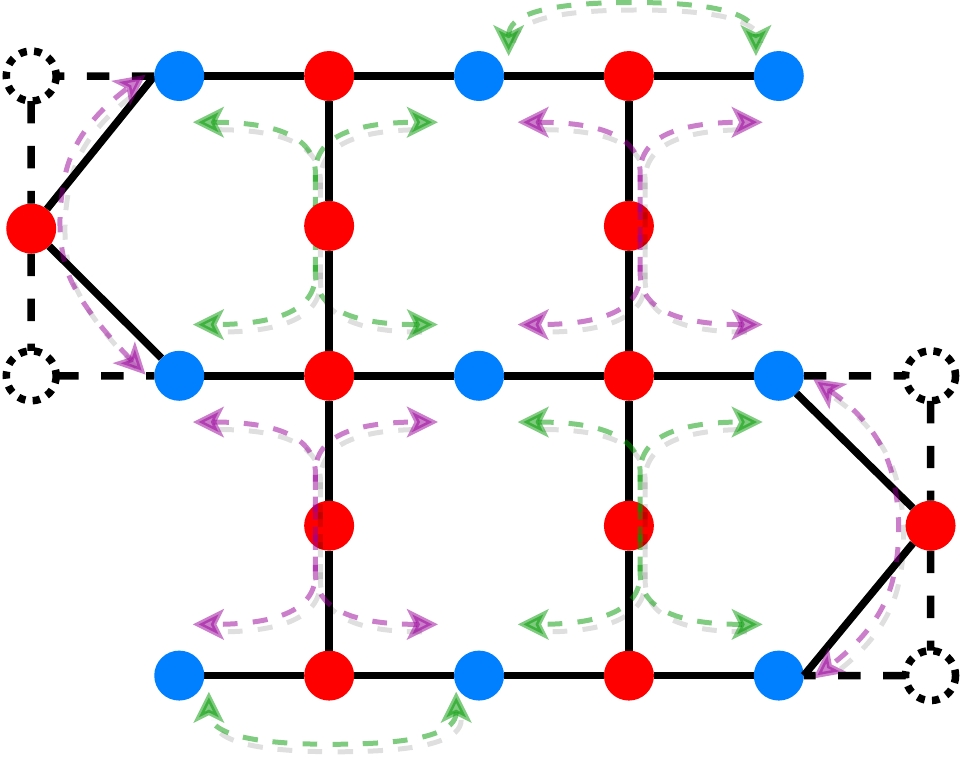}};
\node [below=0.01cm of image3]  {(c) IBM Heavy Square};

\node [below=0.7cm of image1, below right, inner sep=0] (image4) at (4.5,0) {\includegraphics[height=0.155\textwidth]{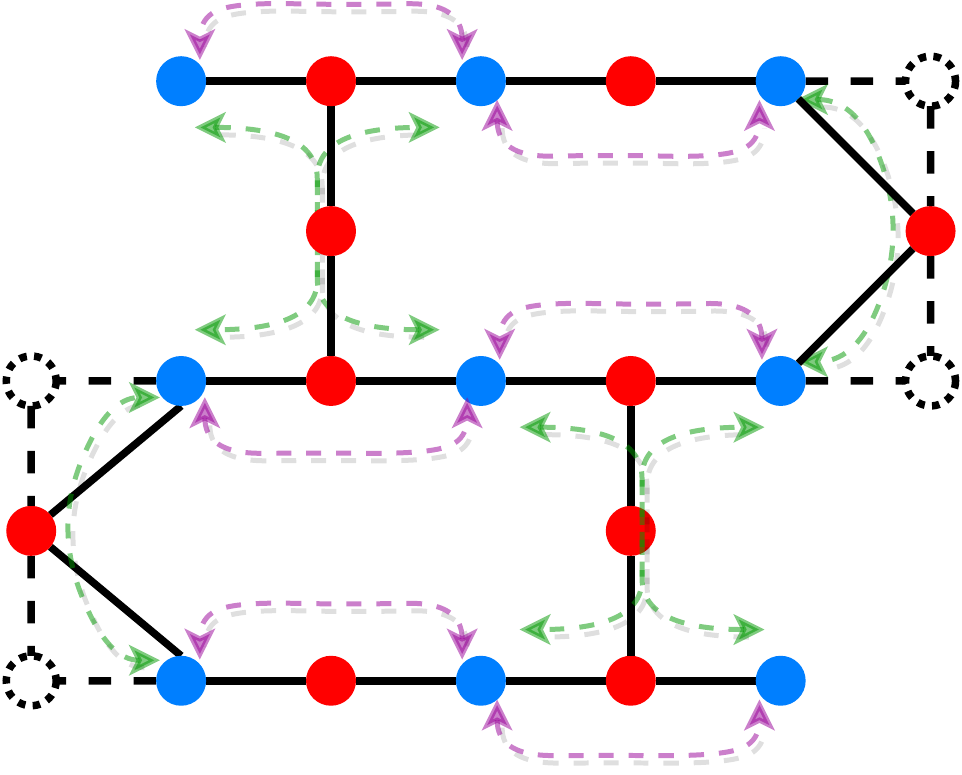}};
\node [below=0.01cm of image4]  {(d) IBM Heavy Hexagon};

\end{tikzpicture}
    \caption{The synthesized distance-3 surface code by \myCompilerNameSpace and the two manually designed QEC codes by IBM~\cite{Chamberland2020TopologicalAS}. IBM  removes some boundary nodes (dotted) and edges (dotted) for better efficiency of stabilizer measurements on borderline.  }
    \label{fig:our_ibm}
    
\end{figure}

\begin{figure}[htbp]
    \centering
        \begin{tikzpicture}
\node [above right,inner sep=0] (image) at (0,0) {\includegraphics[width=0.48\textwidth]{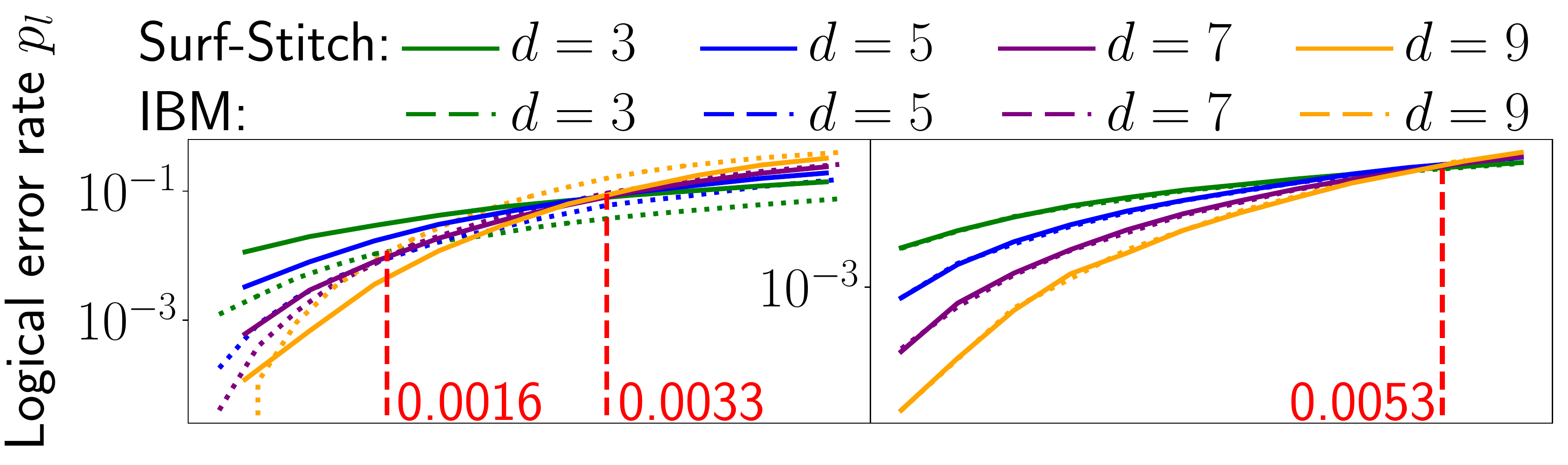}};
\node [below=-0.2cm of image]  at (2.9,0) {(a) Heavy Hexagon};
\node [below=-0.2cm of image]  at (6.9,0) {(b) Heavy Square};
\end{tikzpicture}
    \caption{The simulated error thresholds of the synthesized surface codes by \myCompilerNameSpace and the two manually designed QEC codes by IBM~\cite{Chamberland2020TopologicalAS}.
    }
    \label{fig:comp_ibm}
\end{figure}

\subsection{Synthesis on various SC architectures}

\begin{figure}
    \centering
    \begin{tikzpicture}
\node [above right,inner sep=0] (image) at (0,0) {\includegraphics[height=0.13\textwidth]{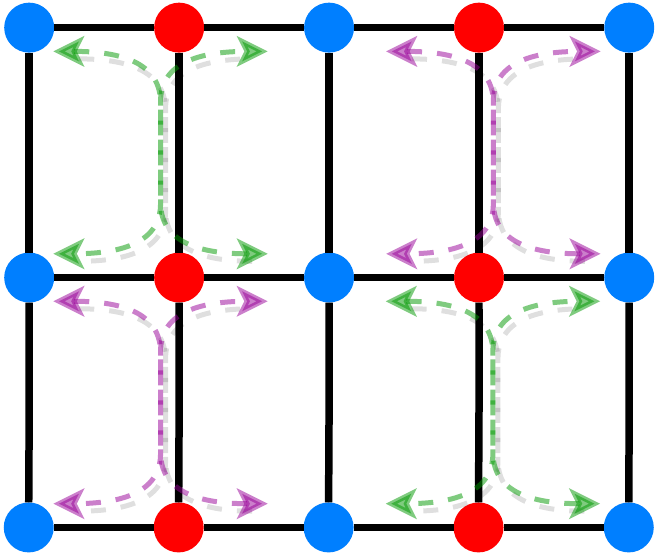}};
\node [below=0.01cm of image]  {(a) \myCompilerNameSpace Square};
\node [above right,inner sep=0] (image1) at (4,0) {\includegraphics[height=0.13\textwidth]{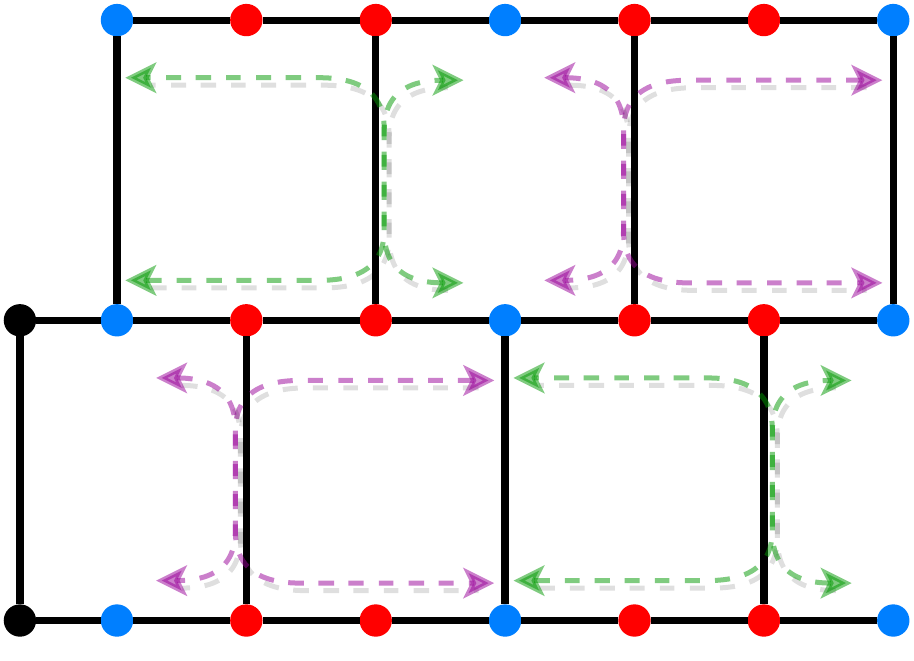}};
\node [below=0.01cm of image1]  {(b) \myCompilerNameSpace Hexagon};
\end{tikzpicture}
    \caption{First four stabilizers of the synthesized distance-5 surface code on square and hexagon architectures. }
    \label{fig:synthesis_poly}
\end{figure}

\begin{figure}
    \centering
    \begin{tikzpicture}

\node [above right, inner sep=0] (image2) at (0,0) {\includegraphics[height=0.16\textwidth]{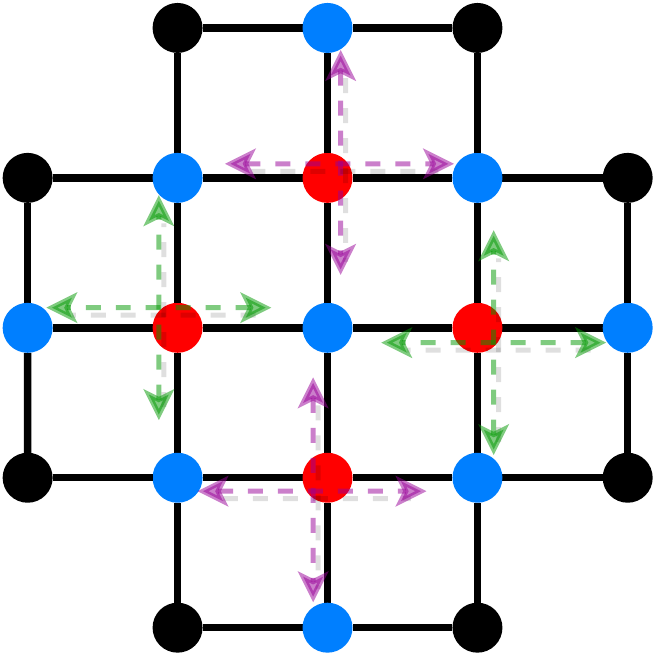}};
\node [below=0.01cm of image2]  {(a) \myCompilerNameSpace Square-4};
\node [above right, inner sep=0] (image2) at (4.5,0) {\includegraphics[height=0.16\textwidth]{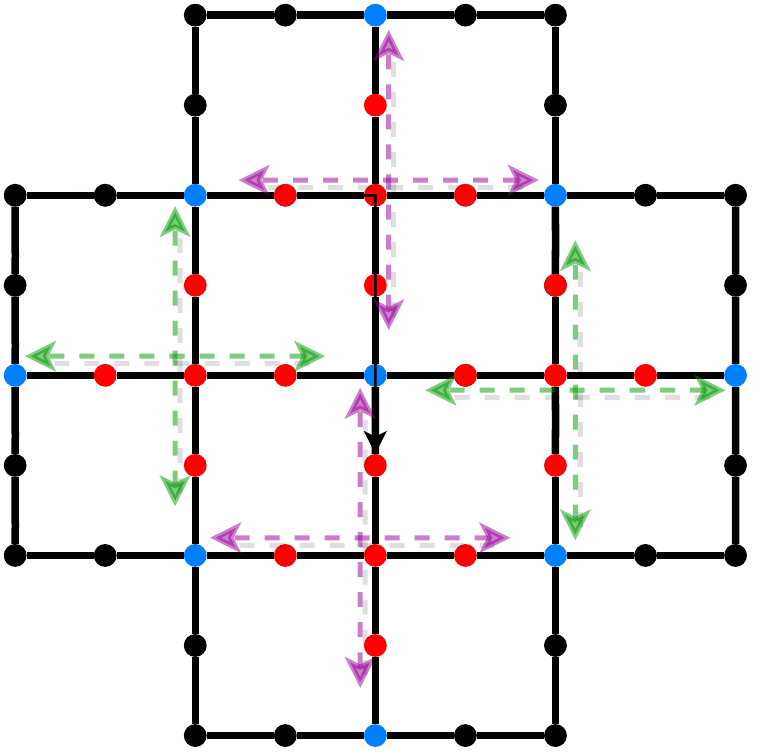}};
\node [below=0.01cm of image2]  {(b) \myCompilerNameSpace Heavy Square-4};
 
\end{tikzpicture}
    \caption{First four stabilizers of the synthesized distance-5 surface code by using syndrome rectangles induced by four-degree qubits. }
    \label{fig:synthesis_four}
\end{figure}

\begin{figure}[htbp]
    \centering
    \includegraphics[width=0.47\textwidth]{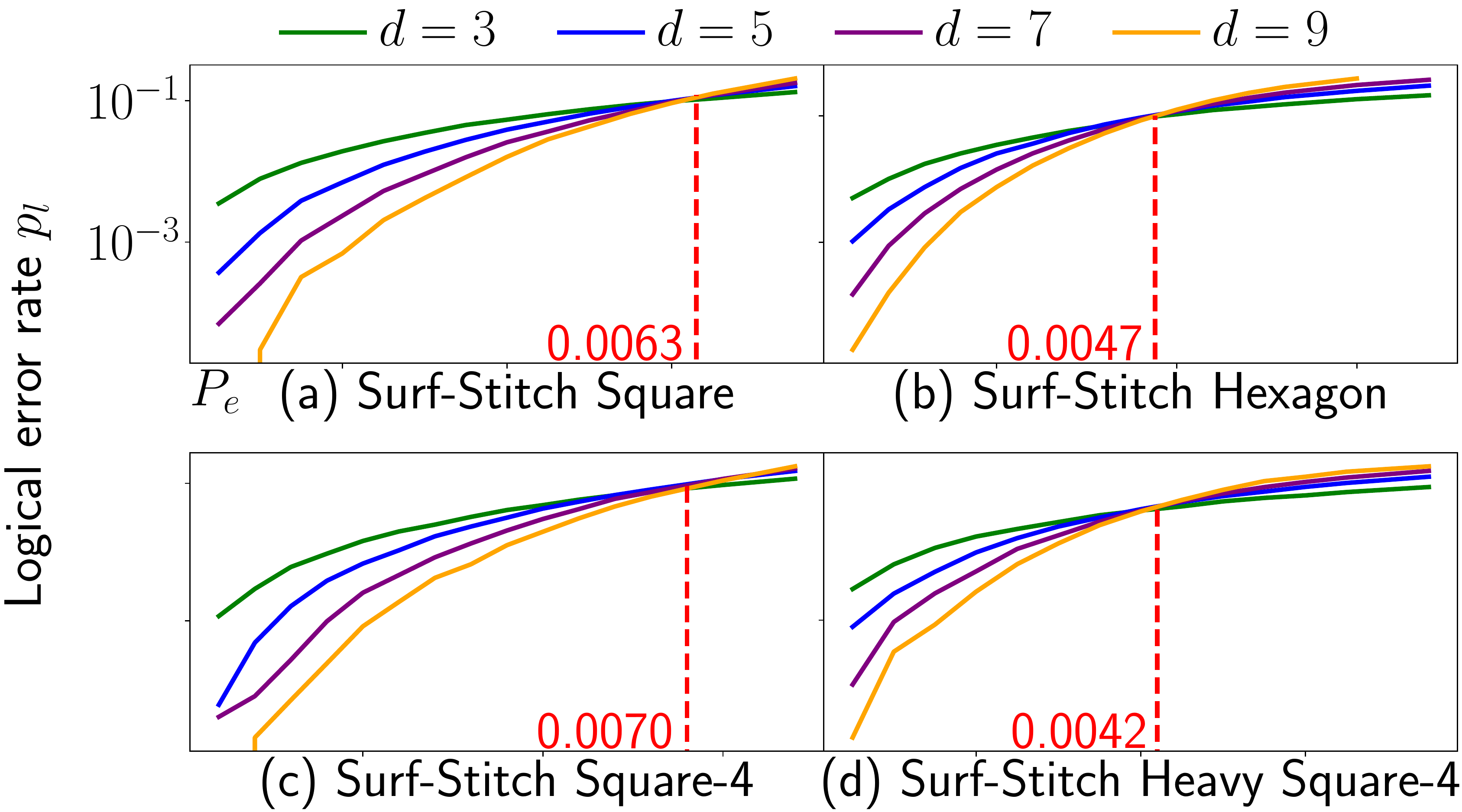}
    \caption{Numerical simulation results for the logical error rates of the synthesized surface codes on various SC architectures.
    }
    \label{fig:error_res}
\end{figure}

\begin{table}[h]
\caption{Error correction metrics of the synthesized surface codes. The average numbers of bridge qubits, CNOT gates, and time steps are computed over all X-type stabilizers.}
\label{tab:arch-result}
\resizebox{0.48\textwidth}{!}{%
\begin{tabular}{|m{1.9cm}|m{1.35cm}|m{1.2cm}|m{1.2cm}|m{1.2cm}|m{1.5cm}|}
\hline
Code  & Avg. bridge qubit \# & Avg. CNOT \# & Avg. time-step \# & Tot. time-step \# & Estimated error threshold \\ \hline
\myCompilerNameSpace Heavy Sqaure  & 3                   & 8           & 12  & 24 & 0.53\%         \\ \hline
\myCompilerNameSpace Heavy Hexagon & 7                   & 19          & 20 & 40 & 0.33\%           \\ \hline
\myCompilerNameSpace Sqaure        & 2                   & 6           & 10 & 20 &  0.63\%         \\ \hline
\myCompilerNameSpace Hexagon       & 4                   & 10          & 13 & 26 & 0.47\%        \\ \hline
\myCompilerNameSpace Sqaure-4       & 1                   & 4           & 8 & 8 & 0.70\%   \\\hline  
\myCompilerNameSpace Heavy Sqaure-4 & 5                   & 12           & 13  & 13 & 0.42\%  \\\hline
\end{tabular}%
}
\end{table}

\begin{table}[h]
\caption{Qubit utilization of the distance-5 surface codes synthesized by \myCompilerNameSpace on different architectures.}
\label{tab:qubit_utilize}
\resizebox{0.49\textwidth}{!}{%
\begin{tabular}{|m{1.9cm}|m{1.55cm}|m{1.75cm}|m{1.75cm}|m{1.5cm}|}
\hline \footnotesize
Code   &  data qubit \% & bridge qubit \% & unused qubit \% & Tot. qubit \# \\ \hline
\myCompilerNameSpace Heavy Sqaure   & 31.7\%         & 45.6\%           & 22.8\%   & 79       \\ \hline
\myCompilerNameSpace Heavy Hexagon & 18.8\%         & 59.4\%           & 21.8\%   & 133      \\ \hline
\myCompilerNameSpace Square         & 55.6\%         & 44.4\%           & 0.0\%   &  45        \\ \hline
\myCompilerNameSpace Hexagon        & 30.5\%         & 48.8\%           & 20.7\%    & 82      \\ \hline
\myCompilerNameSpace Square-4         & 43.9\%         & 56.1\%           & 0.0\%   &  57    \\ \hline
\myCompilerNameSpace Heavy Sqaure-4   & 16.3\%         & 83.7\%           & 0.0\%   & 153       \\ \hline
\end{tabular}%
}
\end{table}

We then apply \myCompilerNameSpace on the square architecture and the hexagon architecture to demonstrate that \myCompilerNameSpace can accommodate various architectures.
The synthesized surface codes on these two architectures are shown in Figure~\ref{fig:synthesis_poly}.
We also include another two surface code synthesis generated by using syndrome rectangles centering around four-degree qubits, as shown in Figure~\ref{fig:synthesis_four}.
Table~\ref{tab:arch-result} and Figure~\ref{fig:error_res} summarizes the error correction performance of these synthesized surface codes.
Table~\ref{tab:qubit_utilize} shows the resource requirement of the synthesized surface codes, and is obtained by finding the smallest tiling of building blocks for each architecture that is able to support the distance-5 surface code, and then computing the ratios of different types of qubits. 

\paragraph{The effect of architecture} High-degree architectures are more effective for surface code synthesis than low-degree architectures, both in error correction performance and resource requirement. Comparing to polygon architectures, heavy architectures reduces the error threshold by 26.7\% on average and they increases the average time-step number by 40.7\% averagely. Heavy architectures also increase bridge qubit number by 114\% averagely, up to 400\%. However, low-degree devices has a much lower physical error rate and are easier to fabricate than high-degree devices.

\paragraph{The effect of synthesis design} Synthesis centering four-degree qubits has higher resource overhead than the synthesis induced by a pair of three-degree qubits. In Table~\ref{tab:qubit_utilize}, 26.7\% and 93.7\%  more qubits are required for `\myCompilerNameSpace Square-4' and `\myCompilerNameSpace Heavy Square-4' than `\myCompilerNameSpace Square' and `\myCompilerNameSpace Heavy Square', respectively. 
Also, on low-degree architectures, the synthesis induced by four-degree qubits may have lower error threshold. Comparing to `\myCompilerNameSpace Heavy Square', `\myCompilerNameSpace Heavy Square-4' downgrades the error threshold by 20.8\%.

In summary, not only the architecture design but also the synthesis design have a critical impact on the resource overhead and error correction performance of the synthesized code.
By optimizing the three key steps in surface code synthesis, \textbf{our framework is able to mitigate the detrimental effect of low-degree architectures}. In fact, 
both using syndrome rectangles induced by four-degree qubits, `\myCompilerNameSpace Heavy Square-4' only achieves 60\% error correction performance of `\myCompilerNameSpace Square-4' (the ideal surface code) while `\myCompilerNameSpace Heavy Square' achieves 75.7\% error correction performance of `\myCompilerNameSpace Square-4'. On the other hand, comparing to the synthesized code on the hexagon architecture whose asymptotic average node degree is 3, `\myCompilerNameSpace Heavy Square' on the heavy square architecture, whose asymptotic average node degree is 2.3, achieves a 12.8\% higher error threshold. These evaluation results demonstrate the effectiveness of the proposed framework for automatic surface code synthesis on any SC architectures.

\subsection{Architecture Design Implications}

Our synthesis results provide some insights for designing future SC quantum architectures that can efficiently execute the surface code. Starting from the ideal 2D qubit array equipped with the surface code in Figure~\ref{fig:arch_implict}(a),
we discuss how to reduce the connectivity of the ideal 2D qubit array while preserving the efficiency for surface code synthesis.

\begin{figure}
    \centering
    \begin{tikzpicture}
\node [above right,inner sep=0] (image) at (0,0) {\includegraphics[height=75px]{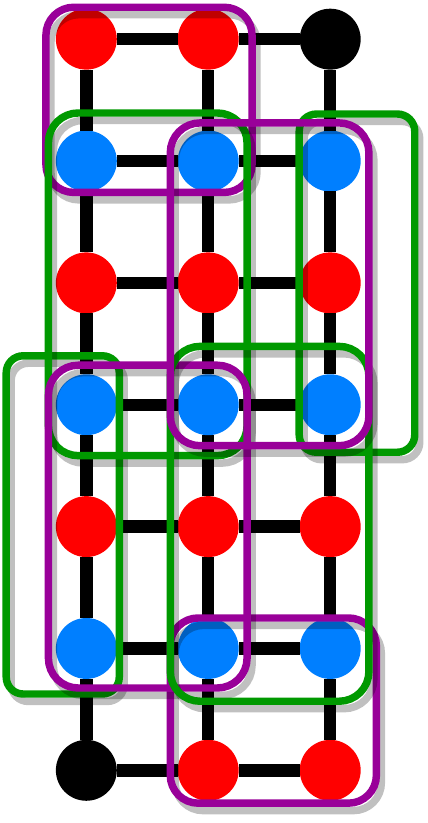}};
\node [below=0.01cm of image]  {(a)};

\node [above right,inner sep=0] (image1) at (2.2,0) {\includegraphics[height=75px]{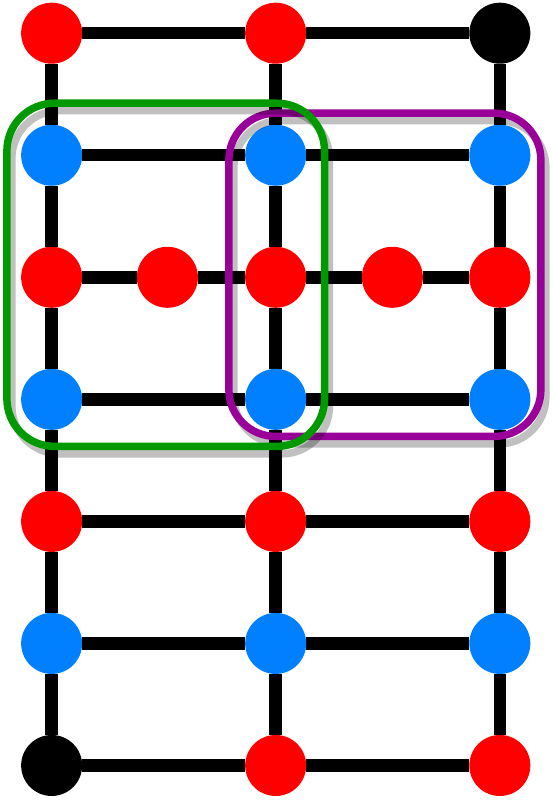}};
\node [below=0.01cm of image1]  {(b)};

\node [above right,inner sep=0] (image2) at (5,0) {\includegraphics[height=75px]{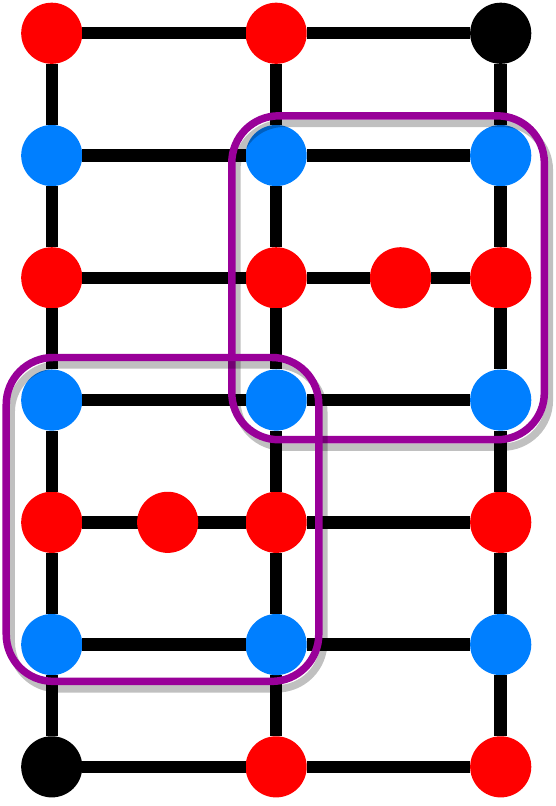}};
\node [below=0.01cm of image2]  {(c)};

\node [below=0.7cm of image, below right, inner sep=0] (image3) at (0,0) {\includegraphics[height=75px]{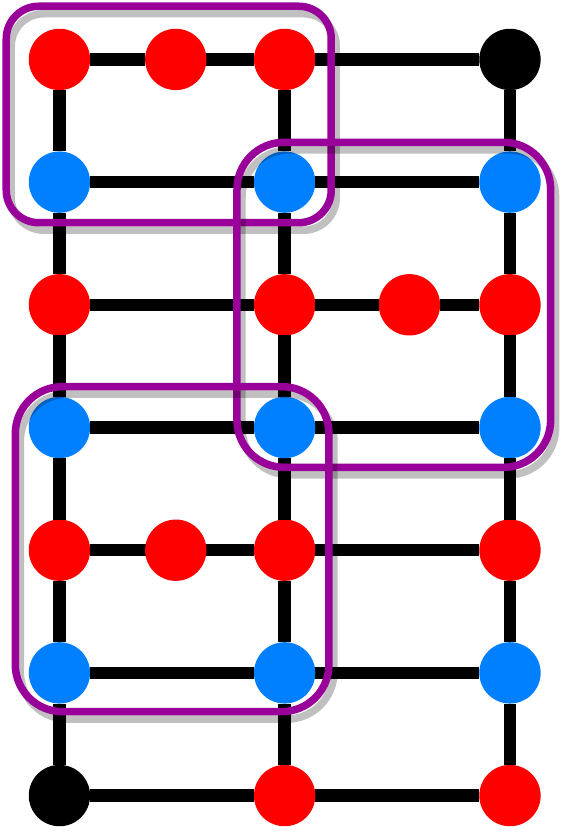}};
\node [below=0.02cm of image3]  {(d)};

\node [below=0.7cm of image1, below right,inner sep=0] (image4) at (2.5,0) {\includegraphics[height=75px]{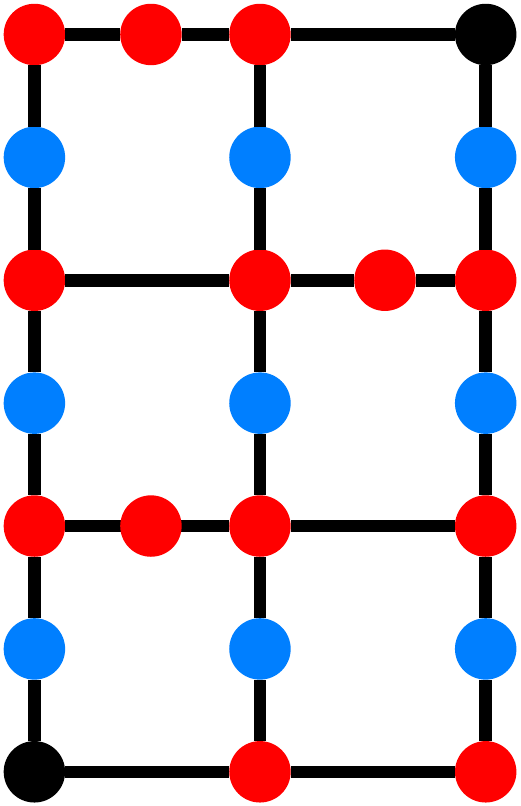}};
\node [below=0.01cm of image4]  {(e)};

\node [below=0.7cm of image2, below right,inner sep=0] (image5) at (5,0) {\includegraphics[height=75px]{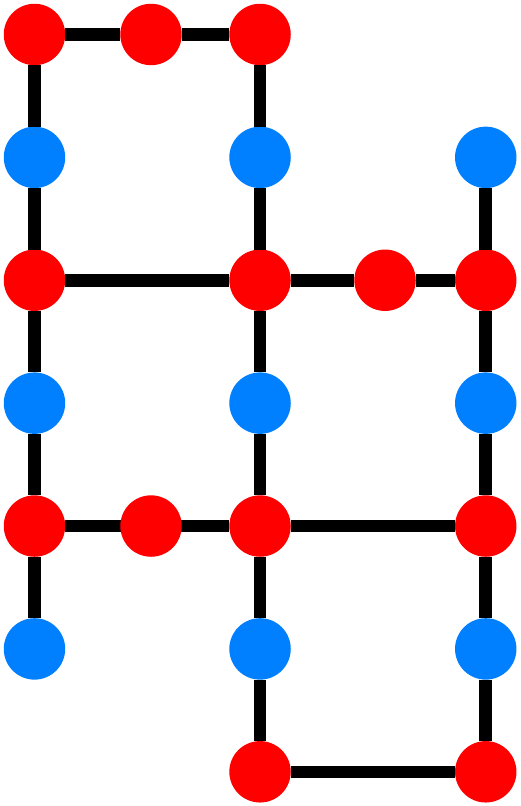}};
\node [below=0.01cm of image5]  {(f)};
\end{tikzpicture}
    \caption{An example for potential architecture design. }
    \label{fig:arch_implict}
\end{figure}

\paragraph{Insert two-degree nodes selectively}
Heavy edges or extra two-degree nodes are harmful to the synthesized surface code. Yet, they can be essential when lowering the device design and fabrication complexity~\cite{Li2020TowardsES}.
As a compromise, we can insert two-degree nodes selectively.

First, we should avoid inserting two-degree nodes both in X-type syndrome rectangles and Z-type syndrome rectangles. This is because we can confine the detrimental effect of two-degree nodes only within one stabilizer set in this way.  Figure~\ref{fig:arch_implict}(b) gives an example where two-degree nodes are inserted into both X- and Z-type syndrome rectangles. Comparing to the architecture in Figure~\ref{fig:arch_implict}(c) which only inserts nodes in Z-type syndrome rectangles, the architecture in Figure~\ref{fig:arch_implict}(b) will have two more time-steps in each error detection cycle. 
Second, it may be of merit to insert two-degree nodes in small syndrome rectangles of a scheduled stabilizer set.
A stabilizer set may have syndrome rectangles of different sizes. Inserting two-degree nodes in small syndrome rectangles may not increase the error detection cycle time. For example, the error detection cycle length of the architecture in Figure~\ref{fig:arch_implict}(d) is the same as the one in Figure~\ref{fig:arch_implict}(c).

\paragraph{Remove useless structures}
The unused qubits and physical connections for hexagon and heavy architectures are not a few, as indicated by Table~\ref{tab:qubit_utilize}. 
There are several types of useless structures in these architectures, including the architecture in Figure~\ref{fig:arch_implict}(d). {First}, connections between data qubits do not provide any benefits for constructing non-conflicted bridge trees. We can remove such connections to lower device complexity, as shown in Figure~\ref{fig:arch_implict}(e).
{Second}, 
qubits and connections on the boundaries of these architectures may not be used by any stabilizers and can thus be removed. For example, the qubits on the top right and bottom left corners of Figure~\ref{fig:arch_implict}(e) can be removed and the resulted architecture is shown in Figure~\ref{fig:arch_implict}(f).

\section{Related Work}

\textbf{Circuit compilation over the surface code:}
Most circuit compilation work on surface code are at the higher logical circuit level.
Javadi et al.~\cite{JavadiAbhari2017OptimizedSC} and Hua et al.~\cite{Hua2021AutoBraidAF} studied the routing congestion in circuit compilation over the surface code.
Ding et al.~\cite{Ding2018MagicStateFU} and Paler et al.~\cite{Paler2019SurfBraidAC} studied the compilation of magic state distillation circuits with existing surface code logical operations. %
These works %
usually assume that the ideal surface code array is already available and do not consider the problem of surface code synthesis on hardware.
In contrast, this paper focuses on optimizing the lower-level surface code synthesis on various SC architectures.

\textbf{QEC code and architecture:}
most efforts on QEC code synthesis are still on looking for an architecture that is suitable for the target code.
Reichardt~\cite{Reichardt2018FaulttolerantQE} proposes three possible planar qubit layouts for synthesizing the seven-qubit color code.
Chamberland et al.~\cite{Chamberland2019TriangularCC} proposes a trivalent architecture where it is straightforward to allocate data qubits of triangular color codes. 
Chamberland et al.~\cite{Chamberland2020TopologicalAS} introduces heavy architectures which reduces frequency collision while still provides support for surface code synthesis. Instead, the synthesis framework in this paper can automatically synthesize the surface code onto various mainstream architectures and avoid manually redesigning code protocols for the ever-changing architectures.
Another line of research targets at compiling stabilizer measurement circuits to existing architectures.
Lao and Almudever \cite{Lao2020FaulttolerantQE} proposes the flag-bridge circuit which can measure the stabilizer of the Steane code on the IBM-20 device. However, their work relies on manually appointed data qubits and bridge qubits, and focuses on the IBM-20 device. Methods in this category are orthogonal to our work, and can be easily merged into our framework.

\section{Discussion}

Though we propose a comprehensive framework to synthesize surface code on various SC devices, there is still much space left for potential improvements. 
For example, besides heuristic data qubit allocation schemes, it is also promising to train a neural network to allocate data qubits.
Non-local bridge trees may also be explored for better parallelism of stabilizer measurements. 
It is also possible to merge bridge trees to resolve bridge qubit conflict. However, this requires careful tuning since large bridge trees may be detrimental for error detection.
We do not include the error decoder design as our work mainly focus on surface code synthesis. Though we can reuse previous surface code decoders~\cite{Chamberland2020TopologicalAS, Varsamopoulos2017DecodingSS, Krastanov2017DeepNN, Baireuther2017MachinelearningassistedCO}, there may be some opportunities to devise more accurate error decoders for the proposed surface code synthesis scheme.

Another interesting future direction is to adapt our synthesis framework to other error correction codes. Though surface code can be implemented on various SC devices, it may not be the most efficient one. Extending our framework to other error correction codes can help us fully exploiting existing device architectures for FT computation.
On other hand, the proposed surface code synthesis framework can provide guidance for SC architecture design. Using our framework as the baseline, we can compare the efficiency of different device architectures for implementing surface code and identify the most efficient one.

\section{Conclusion}
In this paper, we formally describe the three challenges of synthesizing surface code on SC devices and present a comprehensive synthesis framework to overcome these challenges. The proposed framework consists of three optimizations.
First, we adopt a geometrical method to allocate data qubits in a way that ensures the existence of shallow measurement circuits. 
Second, we only consider bridge qubits enclosed by data qubits and reduce the number of bridge qubits by merging short paths between data qubits. The proposed bridge qubit optimization reduces the resource conflicts between syndrome measurement. 
Third, we propose an iterative heuristic to schedule the execution of measurement circuits based on the proposed data qubit allocation. %
Our comparative evaluation to manually designed QEC codes demonstrates that, with good optimization, automated synthesis can surpass manual QEC code design by experienced theorists.

\bibliographystyle{unsrt}
\bibliography{references}

\end{document}